\documentclass[nonacm,acmlarge]{acmart}

\AtBeginDocument{%
  }






\usepackage[T1]{fontenc}
\usepackage{graphicx}
\usepackage{caption}
\usepackage{multirow}
\usepackage{amsfonts}
\usepackage{algorithmic}
\usepackage[linesnumbered,lined,boxed,commentsnumbered,ruled,longend, vlined]{algorithm2e}
\usepackage{amsmath}
\usepackage{color}
\usepackage{graphicx}
\usepackage{wrapfig}
\usepackage{flushend}
\usepackage{todonotes}
\usepackage{cleveref}
\usepackage{booktabs}
\usepackage{comment}

\newcommand{{\oB}}{{\overline B}}
\newcommand{{\rh}}{{\widehat r}}
\newcommand{{\Rh}}{{\widehat R}}

\newcommand{\E}{{\mathcal E}}
\newcommand{\T}{{\mathcal T}}

\newcommand{\cO}{{\mathcal O}}

\newcommand{\cb}[1]{{\color{black}#1}}

\setlength{\textfloatsep}{4pt}

\pagestyle{plain}
\settopmatter{printfolios=true}


\begin{document}

\title{A Poly-Log Approximation for Transaction Scheduling in Fog-Cloud Computing and Beyond}



\author{Ramesh Adhikari}
\orcid{0000-0002-8200-9046}
\affiliation{%
  \institution{School of Computer \& Cyber Sciences Augusta University}
  \city{Augusta}
  \state{Georgia}
  \country{USA}
  \postcode{30912}
  }
\email{radhikari@augusta.edu}

\author{Costas Busch}
\orcid{0000-0002-4381-4333}
\affiliation{%
  \institution{School of Computer \& Cyber Sciences Augusta University}
  \city{Augusta}
  \state{Georgia}
  \country{USA}
  \postcode{30912}
}
\email{kbusch@augusta.edu}

\author{Pavan Poudel}
\orcid{0000-0002-0709-9600}
\affiliation{%
  \institution{University of Houston-Clear Lake}
  \city{Houston}
  \state{Texas}
  \country{USA}
  \postcode{30912}
}
\email{poudel@uhcl.edu}









\begin{abstract}
Transaction scheduling is crucial to efficiently allocate shared resources in a conflict-free manner in distributed systems. We investigate the efficient scheduling of transactions in a network of fog-cloud computing model, where transactions and their associated shared objects can move within the network. 
The schedule may require objects to move to transaction nodes, or the transactions to move to the object nodes.
Moreover, the schedule may determine intermediate nodes where both objects and transactions meet.
Our goal is to minimize the total combined cost of the schedule.
We focus on networks of constant doubling dimension, which appear frequently in practice.
We consider a batch problem where an arbitrary set of nodes has transactions that need to be scheduled.
First, we consider a single shared object required by all the transactions and present a scheduling algorithm
that gives an $O(\log n \cdot \log D)$ approximation of the optimal schedule,
where $n$ is the number of nodes and $D$ is the diameter of the network.
Later, we consider transactions accessing multiple shared objects (at most $k$ objects per transaction) and provide a scheduling algorithm that gives an $O(k \cdot \log n \cdot \log D)$ approximation.
We also provide a fully distributed version of the scheduling algorithms where the nodes do not need global knowledge of transactions.
\end{abstract}

\ccsdesc[500]{Computing methodologies~Distributed algorithms}
\ccsdesc[500]{Theory of computation~Scheduling algorithms}

\keywords{Distributed systems, shared object, fog-cloud computing, transaction scheduling, communication cost, doubling dimension graph.}

\maketitle

\section{Introduction}
There are distributed systems that process a large number of concurrent transactions in industry sectors like FinTech, e-commerce, social media, telecommunications, fog-cloud computing, etc. \cite{cao2023transaction,tran2023disco,zhang2023efficient}. A coordination problem arises when multiple transactions attempt to simultaneously read from and write to the same shared objects, such as common accounts. To prevent inconsistencies while accessing shared objects, each transaction should be executed in an atomic way. Traditionally, locks are used to coordinate the actions of transactions and prevent inconsistencies while accessing shared objects \cite{usui2010adaptive}; however, if locks are not handled properly, that leads to a deadlock and priority inversion. To address these issues, we need to efficiently schedule the execution of transactions ensuring that each shared object is accessed by only one transaction at a time.

Consider a distributed system consisting of $n$ processing nodes 
interconnected in a network represented as graph $G$.
A set of transactions $\T$ and a set of shared objects $\cO$ that are required to be accessed by the transactions are initially located at different nodes of the graph $G$.
We consider the case where multiple transactions require to access the shared objects concurrently.
In order for a transaction to execute, it needs to have an exclusive access to all the required objects.
This can be achieved by either moving the objects to the transaction node (data-flow model \cite{herlihy2007distributed,Sharma2014}),
or moving the transaction to the object nodes (control-flow model \cite{PoudelRG24,saad2011snake}).
Recently, Busch {\em et al.}~\cite{busch2023flexible} considered a more flexible transaction execution model called \textit{dual-flow} model where objects and transactions meet at arbitrary nodes of G during the execution.

Busch {\em et al.}~\cite{busch2023flexible} studied transaction scheduling problem in {\em Trees} with the objective of minimizing total communication cost. However, trees are inherently fragile (i.e., they may easily get disconnected on edge removals), lack redundancy, and are inefficient for modeling general networks (e.g., distances may not be preserved and congestion may increase). In contrast, graphs with a constant doubling dimension~\cite{GuptaKL03} provide a compelling alternative, particularly in the context of scalable, fault-tolerant, and real-time systems, by introducing redundancy, robustness, and flexibility needed for real-world high-performance distributed systems. 
Typical examples of graphs with constant doubling dimension are 2-dimensional grids~\cite{abraham2006routing,geo2009distributed} and randomly distributed unit disk graphs~\cite{MulzerW20}. 
Such graphs are used in various distributed and routing systems, such as solving communication and graph problems~\cite{KitamuraKOI21,kuhn2005locality}, resource management~\cite{CeccarelloPPU17,geo2009distributed,srinivasagopalan2011oblivious}, vehicular routing~\cite{JayaprakashS23}, routing and location services in networks~\cite{abraham2006routing,chan2016hierarchical,konjevod2008dynamic}.

In this paper, we studied transaction scheduling in a distributed system modeled as a graph $G$ with a constant doubling dimension, aiming to minimize the total communication cost for executing transactions. Similar to \cite{busch2023flexible}, we also adopted the dual-flow model for transaction execution.
%
%
We assume that moving an object along a unit-length edge in $G$
costs $\alpha$,
while moving a transaction along the unit-length edge costs $\beta < \alpha$.
(The case $\alpha \leq \beta$ corresponds to the well-studied data-flow model~\cite{busch2015impossibility,busch2017fast,busch2022dynamic,Sharma2014,sharma2015load}.)
We consider a synchronous communication model, where time is divided into discrete steps~\cite{herlihy2007distributed}. At each time step, a node may perform one of the following actions: receive objects or transactions from neighboring nodes; execute transactions that have gathered all required objects; or forward objects or transactions to neighboring nodes.

\begin{table*}[t]
\centering
\scriptsize
\centering
    \begin{tabular}{ccccc}
    \toprule
    {\bf Source \hspace{8mm}} & {\bf Scheduling Model} & {\bf Metric} & \multicolumn{2}{c}{\bf Communication cost approximation} \\
    \cmidrule{4-5}
    & & & {\bf Single object} & {\bf Multiple objects}\\
     
    & & & & {\bf (at most $k$)}\\
    \toprule
    
    Busch {\em et al.}~\cite{busch2023flexible}  & Centralized & Tree & $O(1)$ & O(k)\\

    \hline
    {\bf This paper \hspace{3mm}} & Centralized and & Constant doubling \hspace{6mm}& $O(\log n \cdot \log D)$ & $O(k\cdot \log n \cdot \log D)$\\

    & distributed & dimension graph & & \\
    
    \bottomrule
    \end{tabular}
\caption{Comparison of our proposed scheduler with the most related work~\cite{busch2023flexible}, where $n$ is the total number of nodes, $D$ is the diameter of graph $G$, and $k$ is the maximum number of objects accessed by each transaction. }
\label{tbl:contribution-summary}
\end{table*}
\paragraph{\bf Contributions.} 
The primary goal of this paper is to provide an efficient schedule for the execution of transactions in a network $G$ accessing shared objects.
We consider $G$ as a graph of constant doubling dimension.
Table~\ref{tbl:contribution-summary} compares our results with the most related work~\cite{busch2023flexible}. We provide the following contributions:
\begin{itemize}
    \item Assuming each node has global knowledge of all transactions and each transaction accesses a single shared object, we propose a global-aware distributed scheduling algorithm with an $O(\log n \cdot \log D)$ approximation in communication cost,
    where $n$ is total number of nodes and $D$ is the network diameter.
    \item For transactions accessing up to $k$ shared objects, we present a global-aware distributed scheduling algorithm with an $O(k \cdot \log n \cdot \log D)$ approximation in communication cost.
    \item When nodes do not have global knowledge of transactions, we introduce a fully distributed scheduling algorithm for the single-object case and explain how it can be extended to handle multiple shared objects.
\end{itemize}

\paragraph{\bf Techniques.}
\label{subsec:technique}
When there is a single shared object $o$ to be accessed by all the transactions in $\T$, our scheduling algorithm consists of two main steps: 
(1) Find a set of nodes $S_f$ in $G$ where transactions and the object $o$ can meet together with minimal cost, and 
(2) Make the object $o$ visit each node in $S_f$ and execute the transactions in order.
Each node in $S_f$ contains at least one transaction.
When the object visits a node in $S_f$, respective transaction(s)
in the node will execute sequentially.
The overall schedule provides a global order of all transactions in $\T$.
We show that the schedule of our algorithm
has $O(A \log D )$ approximation,
where $A$ be the approximation of the optimal TSP tour for the nodes in $S_f$. 

To calculate the set $S_f$, we use a hierarchical sparse partition $H$ of $G$,
as in~\cite{Jia2005}.
Then the object follows an approximate TSP tour for visiting the nodes of $S_f$.
We can use any of the two ways to calculate the tour.
\begin{itemize}
\item{\em Universal TSP}:
As outlined in \cite{Jia2005},
the hierarchy $H$ can be used to provide a global order of all the nodes in $G$
which is called a {\em universal TSP tour}.
For any subset of $G$, the respective order of nodes in $H$ gives 
an $A=O(\log n)$ approximation of the optimal TSP tour.
When considering $S_f$, using the respective TSP tour,
we obtain the $O(\log n \cdot \log D)$ approximation.

\item{\em MST}:
Alternatively,
we can first calculate a minimum weight spanning tree (MST) with the nodes of $S_f$,
which can be used to approximate the tour with $A = 2$.
Thus, this approach gives an overall approximation of $O(\log D)$ for our proposed algorithm.
\end{itemize}

We would like to note that although the MST approach gives a better approximation,
the distributed version of the approach could require more messages.
In particular, assuming that the hierarchical partition $H$ is given,
the number of messages to compute the tour with $H$ is $O(n \log D)$.
While using the MST approach for the tour, it involves $O(n^2)$ messages.

\begin{wrapfigure}{r}{0.64\textwidth}
  \begin{center}

    \includegraphics[width=0.63\textwidth]{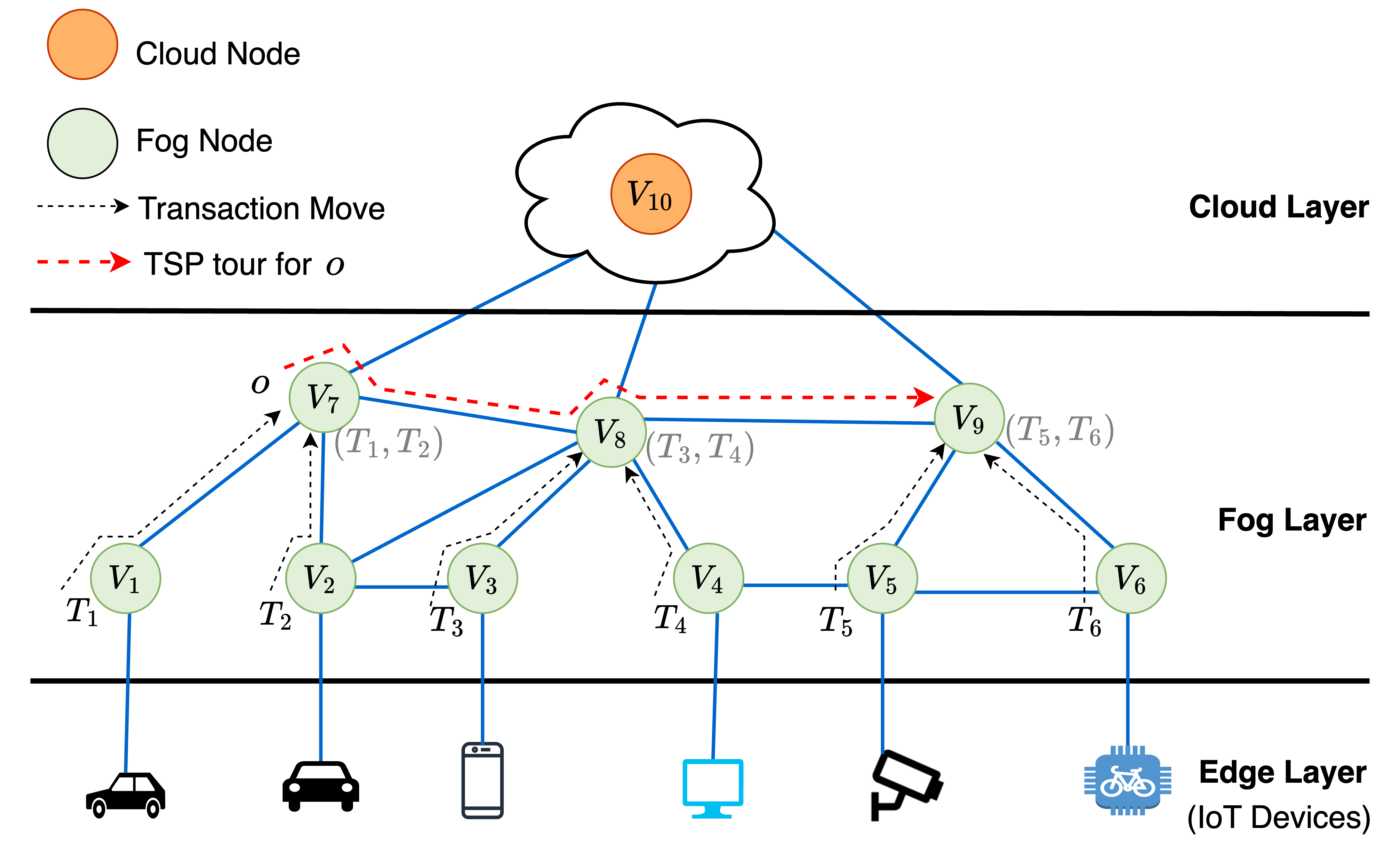}
  \end{center}

  \caption{Illustration of fog-cloud computing model.}

  \label{fig:fog-cloud-computing-model}
\end{wrapfigure}
When transactions are required to access multiple shared objects, we extend the algorithm for a single shared object. First, a set of nodes $S$ in $G$ is calculated with respect to each individual object and the transactions requiring that object where the transactions and the object can meet together with minimal cost. Next, for each transaction $T$, a common node $s'$ is found where all the required objects for $T$ will gather with minimal additional cost. $s'$ is added to the final set of nodes $S_f$ and $T$ will also move to $s'$ for the execution. Following a TSP tour, objects will move to their respective nodes in $S_f$, and the transactions in each node execute sequentially when all the required objects gather at that node.



\paragraph{\bf Applications.}
Our scheduling technique is applicable in many distributed applications such as transactional memory (TM)~\cite{Herlihy1993,shavit1995software}, IoT and fog-cloud computing~\cite{GoudarziPB23,NarmanHAS17,nikoui2020cost,tran2023disco}, financial systems~\cite{GramoliLTZ24}, and Big Data~\cite{KANG2022129}. In the following, we discuss the applicability in fog-cloud computing in detail.


{\em Fog-cloud computing:} IoT applications often rely on centralized cloud computing nodes. However, there exists a high communication delay between cloud and IoT devices \cite{khiat2024genetic,nikoui2020cost}. Distributed fog computing addresses this challenge by providing resources at the edge of the network, closer to end devices that reduce network delay \cite{mokni2023multi,nikoui2020cost}.
In fog-cloud computing model, transactions are generated from user IoT devices and sent to the fog layer.
If the fog node has sufficient resources and the object it requires to execute, then the transaction is scheduled and executed in the fog layer; otherwise, it is routed to a cloud server for execution \cite{khiat2024genetic,mokni2023multi,nikoui2020cost}. Figure~\ref{fig:fog-cloud-computing-model} shows a simple example of the fog-cloud computing. 

As a practical example, consider a vehicle-tracking scenario where IoT devices, like traffic cameras, transmit vehicle data as transactions to the fog layer. Each transaction creates, reads, or updates a vehicle-related object. Due to vehicle mobility, the object may be distant from the transaction node, leading to communication overhead. To reduce this cost, the object, the transaction, or both can be moved. Our proposed algorithm efficiently enables such movements for better transaction scheduling and execution.

\paragraph{\bf Paper Organization:}
The rest of the paper is organized as follows: We discuss related work in Section \ref{sec:related-work} and model and preliminaries in Section \ref{sec:prelimanaries}. We present global-aware distributed scheduling algorithms in Section \ref{section:basic-scheduler}, followed by fully-distributed algorithms in Section~\ref{sec:distributed-scheduler}. We conclude our paper in Section~\ref{sec:conclusion}. Some of the pseudocodes, proofs, and other details are omitted due to space constraints.

\section{Related Work}
\label{sec:related-work}
Several studies have explored the scheduling of transactions within distributed systems considering fog-cloud computing~\cite{nikoui2020cost,nikoui2016providing}. Nikoui et al. \cite{nikoui2016providing} employed a genetic algorithm (GA) to minimize energy consumption in task scheduling for green cloud computing systems. However, their approach assumes a central cloud broker utilizing the GA to allocate tasks across a set of virtual machines. Later, Nikoui et al. \cite{nikoui2020cost} introduced a cost-aware genetic-based task scheduling algorithm for fog-cloud environments, also relying on a centralized fog node, termed a fog broker, to handle transaction scheduling. 
Peixoto et al. \cite{peixoto2021hierarchical} proposed a multilevel fog-cloud architecture for transaction scheduling. All of these algorithms depend on a single fog broker, and none of these works considers efficient communication analysis, such as determining when to move the object, the transaction, or both to intermediate nodes, to optimize communication costs and improve the overall efficiency of transaction scheduling.


Extensive research \cite{busch2023stable,busch2017fast,busch2022dynamic,Kim2010,PoudelRG24,PoudelRS21,graphtm} has been done on scheduling transactions in distributed transactional memory systems. In transactional memory, each transaction requires access to specific objects for execution. Coordination between objects and transactions is often necessary for accessing objects and executing transactions. The majority of TM research is based on the data-flow model~\cite{herlihy2007distributed,tilevich2002j}, in which transactions remain static while objects move between nodes to reach the locations where the transactions are held. 
In contrast, several studies adopt the control-flow model~\cite{PoudelRG24,saad2011snake}, where transactions move across nodes to access static objects. 
Lately, the dual-flow model~\cite{busch2023flexible,Hendler2013} has been introduced in which both objects and transactions are moved to some intermediate node to minimize total communication cost. Additionally, there are transaction scheduling techniques in the context of blockchain sharding~\cite{adhikari2023lockless,adhikari2024spaastable}. However, in blockchain sharding, objects are static and only transactions send messages, without a focus on reducing communication costs.



The most closely related work is~\cite{busch2023flexible}, where the authors provide two variants of transaction scheduling algorithms for minimizing communication cost in transactional memory. However, their results are restricted to trees and do not apply to graphs 
with constant doubling dimension that we consider here. Moreover, the algorithms proposed in~\cite{busch2023flexible} 
rely on a centralized model where nodes have access to global information about transactions and objects.
In contrast, this paper also presents a fully distributed version of the algorithms that operate without requiring global knowledge of transactions.
The novelty of our algorithms lies in the use of hierarchical sparse partition $H$ to compute the schedule.

\section{Technical Preliminaries and Model}
\label{sec:prelimanaries}
We model the network as a connected weighted graph $G = (V, E, w)$. 
The $n$ vertices in the set $V$ represent the processing nodes that may hold object(s) and transaction(s). Communication links between nodes are represented by edges in the set $E \subseteq V \times V$, and each edge is associated with a weight assigned by the function $w : E \rightarrow \mathbb{R}^+$ to denote the 
distance between the two nodes. The minimum distance between a pair of nodes is $1$. 
A {\em path} $p$ in $G$ is a sequence of nodes with respective edges between adjacent nodes.
There is a path between each pair of nodes and the distance between two nodes varies from $1$ to $D$, where $D$ is the diameter of $G$. 

For a node $v$, the $y$ neighborhood $N_y(v)$, where $y \geq 0$,
is the set of nodes that are at a distance at most $y$ from $v$
(we also include $v$ in $N_y(v)$).
For a set of nodes $S$,
the $y$-neighborhood of $S$ is $N_y(S) = \bigcup_{v \in S} N_y(v)$,
which is all the nodes that are at a distance
at most $y$ from some node in $S$.
The length of a path $p$ in $G$, denoted as $|p|$,
is the sum of the weights of its edges;
if the path is just a single node, then its length is trivially 0.

Similar to the previous work in \cite{busch2023flexible}, we adopt dual-flow model, allowing both shared objects and transactions the flexibility to move between the network nodes.
The cost of moving an object of size $\alpha$ across a unit-weight edge is represented by $\alpha$. Similarly, the cost of moving a transaction across a unit-weight edge is denoted by $\beta$ where $\alpha>\beta$. The scheduling algorithm is responsible for determining the execution schedule $\E$ of the transactions, considering the movements of both objects and transactions in $G$.

We consider graphs with constant doubling dimension ($2^\delta = \Theta(1)$) as described in~\cite{srinivasagopalan2011oblivious}.
In the following definition, a {\em ball of radius $r$} refers to the $N_{r}(v)$-neighborhood of some node $v$.

\begin{definition}[doubling-dimension of graph~\cite{srinivasagopalan2011oblivious}] The doubling dimension of a graph $G$ is the smallest value of $\delta$ such that every ball of radius $r$ in $G$ can be covered by the union of at most $2^\delta$ balls of radius $r/2$. If $\delta$ remains constant, we say $G$ has a constant doubling dimension.
\end{definition}

\subsection{Partition Hierarchy}
\label{section:overlay-tree}
Given a weighted graph $G=(V,E,w)$ with diameter $D \leq n$,
we can build a partition hierarchy $H$ with $O(\log D)$ levels (or layers),
similar to the construction in \cite{Jia2005}.
A $(r,\sigma,I)$-partition
divides $V$ 
into a group of sets $\mathcal{X}=\{X_1,X_2,\dots \}$ such that each set $X_i\in \mathcal{X}$ has diameter at most $r\cdot \sigma$ and for each node $v\in V$, the neighbourhood $N_r(v)$ intersects with at most $I$ sets in the partition;
namely, $|\{X_i : X_i \in \mathcal{X} \wedge X_i \cap N_r(v) \neq \emptyset \}| \leq I$.

$H$ consists of $h+2$ levels, where $h = \lceil \log_\rho D \rceil$,
where $\rho=4\sigma$.
Each level $l \geq 0$ of $H$, is a 
$(r_l,\sigma,I)$-partition $\mathcal{P}_l$ 
computed with $r_l=\min(D,\rho^l)$. 
In each set $X_i$ in $\mathcal{P}_l$, an arbitrary node is designated as leader node $leader(X_i)$.
Let ${\mathcal P}_{-1}$ be the trivial partition where each node 
of $G$ is a cluster by itself.
The maximum level is $h$,
and at that level, the whole graph is a single cluster.
For a leader $\ell$ at level $l < h$,
the parent cluster (and respective parent leader)
is the cluster at layer $\mathcal{P}_{l+1}$
that contains $\ell$.

For graphs with a constant doubling dimension,
the parameters $\rho, \sigma,$ and $I$ are all constant values~\cite{Jia2005}.
For any subset $S'$, let $G'$ be the complete graph consisting of only the nodes in $S'$, such that the edge weight between two nodes in $G'$ is the same as the distance between them in $G$. Then, the order of the nodes for $S'$ in the TSP tour of $G$ gives a $\kappa$-factor approximation for the TSP tour of $S'$ in $G'$ (by connecting the last to the first node in $S'$), where $\kappa = O(\log n)$. 
In \cite{Jia2005},
the authors describe a method to transform the hierarchy $H$ into to a $\kappa$-universal TSP tour for $G$.
Figure~\ref{fig:doubling-dimension-graph} illustrates a partition hierarchy on a constant doubling dimension of graph $G$ with $n=7$ nodes where the left figure shows a partitioning scheme and the right figure shows leader nodes of each cluster and their parents (connected through a virtual link) in the graph $G$. 

\begin{figure}[!t]
    \centering
    \includegraphics[width=0.75\textwidth]{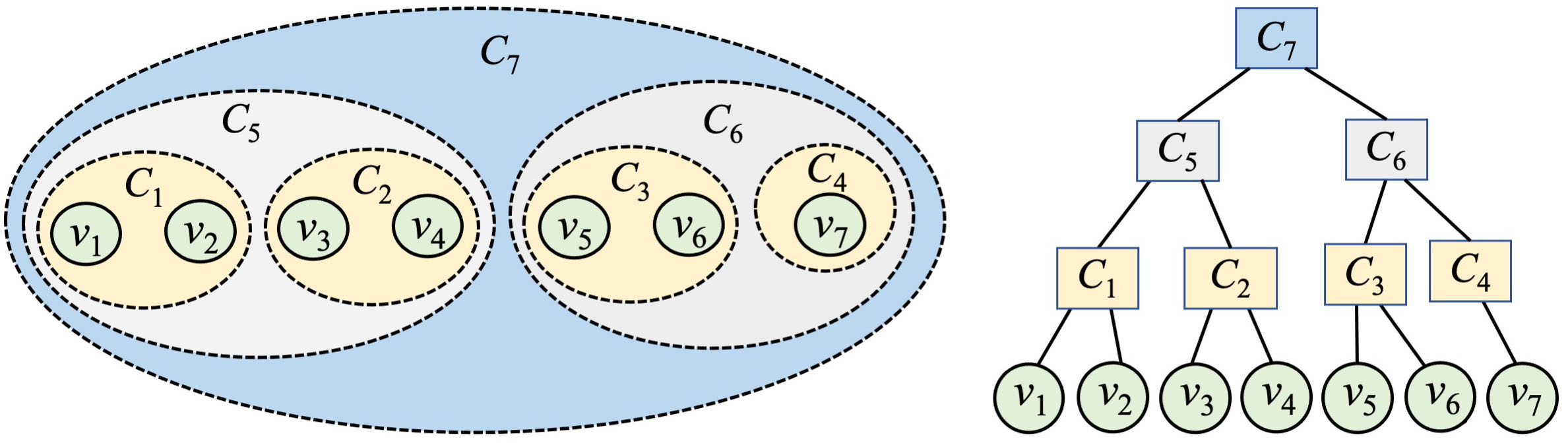}
    \caption{Hierarchical partitioning of graph $G$ with constant doubling dimension.}
    \label{fig:doubling-dimension-graph}
\end{figure}
\section{Global-Aware Distributed Scheduling}
\label{section:basic-scheduler}
In this section, we provide two basic transaction scheduling algorithms, {\sc GlobalAware\_SingleObj} and {\sc GlobalAware\_MultipleObjs}, for graphs with a constant doubling dimension, assuming that the nodes have global knowledge of transactions and objects they access.

\subsection{Single Object}
\label{sec:single-object}
We consider a single shared object $o$ of size $\alpha>1$ and a set of transactions $\T=\{T_1,T_2,\dots\}$ initially positioned at the nodes of $G$. 
All the transactions in $\T$ require object $o$ for execution. We provide a basic scheduling algorithm denoted as {\sc GlobalAware\_SingleObj} for the execution of transactions accessing object $o$ and its pseudocode is given as Algorithm~\ref{alg:offline-single-obj}. 

A hierarchical partition $H$ of a given graph $G$ can be computed as described in Section~\ref{section:overlay-tree}. 
Now, our task is to find a set of special nodes in $H$ (which we call {\em super-leaders}) to move object $o$ for providing access to the transactions.
The concept of super-leaders introduced here is derived from the notion of supernodes utilized in trees, as outlined in~\cite{busch2023flexible}.
A leader node $v$ of a cluster (set) $X$ of a partition at some level $l \geq 0$ becomes a super-leader if the cluster contains sufficiently large number of transactions such that the cost of moving object $o$ from $v$ to any of the transaction nodes contained in $X$ is less than the total cost of moving all the transactions contained in $X$ to $v$. The idea is that the movement of an object incurs high communication cost compared to the movement of a transaction, thus if some cluster contains trivially less number of transactions, then moving an object to the leader of that cluster results in greater communication cost than transferring the transactions from that cluster to the next-level cluster.
Overall, we attempt to reduce the length of object movement path at each level by computing the super-leaders. 
For each transaction $T_i\in \T$, we find a super-leader at which $T_i$ is executed and we call it a {\em dedicated super-leader} for $T_i$. If a transaction does not have a dedicated super-leader, then it is moved and executed at the node where object $o$ is initially located.

\begin{algorithm}[!t]
\scriptsize
\caption{{\sc GlobalAware\_SingleObj}}
\label{alg:offline-single-obj}
\SetKwInOut{Input}{Input}\SetKwInOut{Output}{Output}
\Input{Graph $G$ with $(r,\sigma,I)$-partition $H$ (with $h+2$ levels), and a set of transactions $\T$}
\BlankLine
$v'\leftarrow$ initial home node for object $o$\; 
$\alpha,\beta\leftarrow$ cost of moving object~$o$ and a transaction over a unit weight edge of $G$, respectively;\\

Initialize $S\leftarrow\emptyset$ and $\gamma \leftarrow\lceil\frac{\alpha}{\beta}\rceil$ \;
\SetKwBlock{DoInRound}{\normalfont {{\bf for each} {\em level $l$ from $0$ to $h+1$ in $H$ with partition $\mathcal{P}_l$} {\bf do}}}{}
\DoInRound{
    \SetKwBlock{DoInRound}{\normalfont {{\bf for each} {\em cluster $X \in \mathcal{P}_l$} {\bf do}}}{}
    \DoInRound{
       $T(X)\leftarrow$ set of transactions contained by the nodes in $X$ that have not been assigned a dedicated super-leader yet\;
        \If{ $|T(X)|\geq \cb{2\gamma}$}
        {
            Add $leader(X)$ to the set of super-leaders $S$\;
            Assign $leader(X)$ as the dedicated super-leader for all the transactions in $T(X)$;
        }
}
}

{\bf for each} $T_i\in\T$, {\bf if} $T_i$ doesn't have a dedicated super-leader, {\bf then} move $T_i$ to $v'$\;

$S_f \subseteq S\leftarrow$ final set of dedicated super-leaders;\\

\SetKwBlock{DoInRound}{\normalfont {{\bf for each} {\em level $l\geq 0$ in $H$ with partition $\mathcal{P}_l$} {\bf do}}}{}
\DoInRound{
   
    Let ${\widehat L}_l$ be the set of dedicated super-leaders at level $l$\;
    ${\widehat T}_l \leftarrow$ set of transactions contained by the dedicated super-leaders at level $l$\;
    \If{$|{\widehat T}_l|<8I\alpha$} {
    Move all the transactions in ${\widehat T}_l$ to $v'$\;
    Update $S_f \leftarrow S_f\setminus {\widehat L}_l$;
    }
}

Move each transaction $T_i\in \T$ to its corresponding dedicated super-leader\;
Calculate TSP tour on the set of dedicated super-leaders $S_f \cup \{ v' \}$\;
Object $o$ traverses the ordered nodes of the 
TSP tour with transactions at the respective node being executed\;
\end{algorithm}

Let $v'$ be the initial home node for object $o$ in $H$.
At each level $l\geq0$ in $H$, if a cluster (set) $X$ of the partition $\mathcal{P}_l$ at level $l$ contains at least $2\gamma$ (where $\gamma=\lceil\frac{\alpha}{\beta}\rceil$) transactions that are not assigned a dedicated super-leader yet, then $leader(X)$ becomes a super-leader and each transaction $T_i \in X$ is assigned $leader(X)$ as a dedicated super-leader. Let $S$ be the set of such super-leaders. 
Each super-leader $s\in S$ also maintains the set of transactions (nodes) ${\widehat T}(s)$ for which $s$ is a dedicated super-leader.
If some transactions do not have a dedicated super-leader yet, then those transactions are moved directly to $v'$ for the execution.

Next, we prune the dedicated super-leaders. 
Let $S_f\subseteq S$ be the final set of dedicated super-leaders. Initially, each dedicated super-leader $s \in S$, where $|{\widehat T}(s)|\geq 2\gamma$, is added to $S_f$.
At each level $l\geq 0$ in $H$, if the sum of total number of transactions contained by the dedicated super-leaders at level $l$ is less than $8I\alpha$, then those transactions will be relocated to the object node for the execution
and all the dedicated super-leaders at level $l$ are removed from $S_f$. This threshold is needed in order to minimize the total communication cost as shown later in the proof of Lemma~\ref{lemma:basic-lower-bound} that if the number of transactions is less than $8I\alpha$, then moving transactions to the node where the object is located is cheaper than moving the object to the transactions node.

In the final stage, each transaction is sent to its assigned dedicated super-leader in the set $S_f$. The goal is to move object $o$ to each super-leader in $S_f$ and execute the corresponding transactions there.
This is achieved by computing a TSP tour starting from $v'$ that visits all nodes in $S_f$.
To compute the tour, we can use the Universal TSP approach based on the hierarchy $H$ as in \cite{Jia2005}.

\begin{wrapfigure}{r}{0.56\textwidth}
\centering
\includegraphics[width=0.56\textwidth]{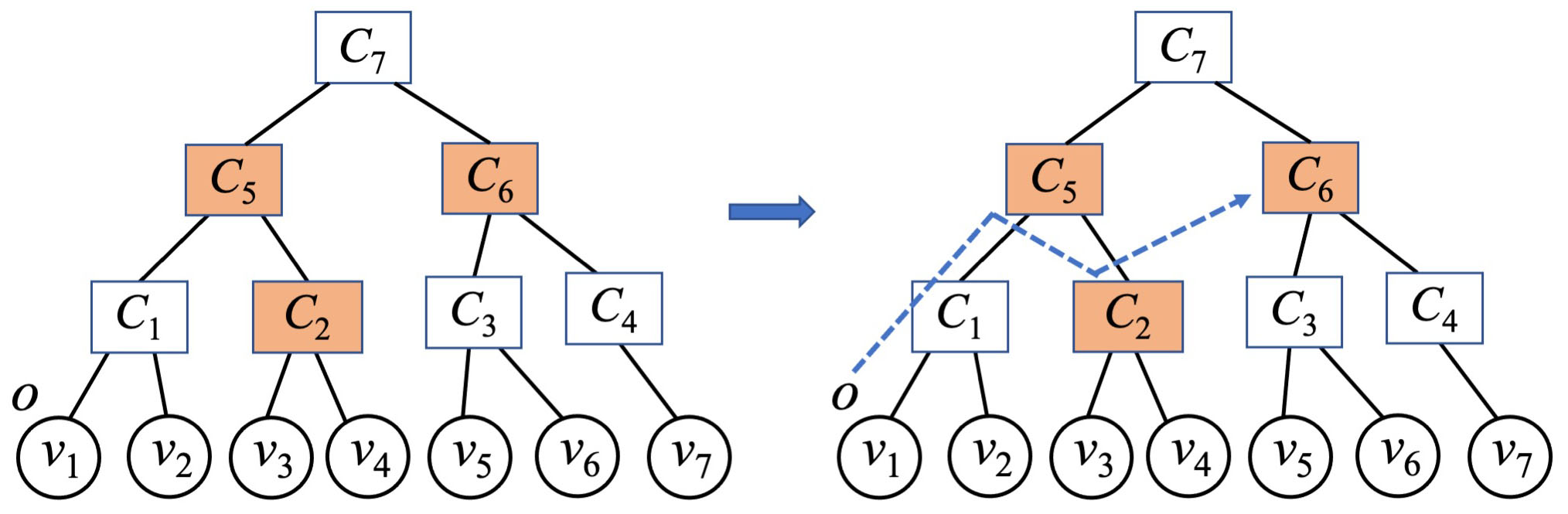}

\caption{Illustration of dedicated super-leaders and the TSP tour for moving object along the dedicated super-leaders in $H$ by Algorithm \ref{alg:offline-single-obj}.}

\label{fig:leader-and-superleader}
\end{wrapfigure}
Figure \ref{fig:leader-and-superleader} illustrates an execution of Algorithm \ref{alg:offline-single-obj}.
The figure in the left shows the set of dedicated super-leaders highlighted in orange color.
Assuming that the object $o$ is initially positioned at node $v_1$, the dashed line on the right figure traces the TSP tour for $o$ moving from $v_1$ to the dedicated super-leaders executing the transactions in each node.

\paragraph{\bf Analysis.}
Without loss of generality, assume that $\alpha > 1$,
and $\beta = 1$.
Hence, $\gamma = \alpha / \beta = \alpha$,
and the criterion of a leader becoming a super-leader is 
$2\gamma = 2 \alpha$. 

Each involved node in $V$ holds at most one transaction.
Let ${\widehat T}_i \subseteq \T$ denotes the set of transactions whose dedicated super-leader is at level $i$.
Let ${\widehat L}_i$ denotes the set of dedicated leaders at level $i$
(i.e., each $v \in {\widehat L}_i$ has a transaction that picked it as dedicated).
Let $Tour^*({\widehat L}_i)$ denotes the optimal length TSP tour for the nodes ${\widehat L}_i$,
which also includes the original position of the object $o$.
Let $C^*$ be the optimal cost of executing all the transactions.

For a set of $Z$ transactions, let $C_{\text{direct}}(Z)$ be the cost of a schedule that sends all $Z$ transactions to the object's node (object does not move in this schedule). 

\begin{lemma}
\label{lemma:direct}
   For $Z > 0$ transactions, 
   $C_{\text{direct}}(Z) \leq 4 (|Z|/\alpha + \log_2 D) C^*$.
\end{lemma}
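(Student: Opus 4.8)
The plan is to charge the direct schedule against a fixed optimal schedule by routing each transaction's contribution through the node where it actually meets the object in the optimum, and then invoking the triangle inequality. Fix an optimal schedule of cost $C^*$ and decompose its cost as $C^* = \alpha \cdot L_{obj} + L_{trans}$ (recall we normalized $\beta = 1$), where $L_{obj}$ is the total distance travelled by the object $o$ and $L_{trans}$ is the total distance travelled by all transactions. Since both summands are nonnegative, $C^* \ge \alpha\, L_{obj}$ and $C^* \ge L_{trans}$. For each transaction $T_j \in Z$ with initial node $u_j$, let $m_j$ be the node at which $T_j$ executes in this optimal schedule, i.e. the node where $T_j$ and $o$ meet.

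From the optimal schedule I would extract two lower bounds. \textbf{(i)} Because the object is physically present at every meeting node $m_j$ at the moment $T_j$ executes, and it starts at $v'$, its trajectory reaches each $m_j$; hence $L_{obj} \ge \max_{T_j \in Z} \dist(v', m_j)$, giving $\max_{T_j \in Z} \dist(v', m_j) \le C^*/\alpha$. \textbf{(ii)} Because each $T_j$ must travel from $u_j$ to its meeting node $m_j$, we have $L_{trans} \ge \sum_{T_j \in Z} \dist(u_j, m_j)$ (restricting the sum to the subset $Z$ can only undercount), hence $\sum_{T_j \in Z} \dist(u_j, m_j) \le C^*$.

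Now write the direct cost as $C_{\text{direct}}(Z) = \sum_{T_j \in Z} \dist(u_j, v')$ and split each term by the triangle inequality $\dist(u_j, v') \le \dist(u_j, m_j) + \dist(m_j, v')$. Summing over $Z$, the first group is bounded by (ii) as $\sum_{T_j \in Z}\dist(u_j,m_j) \le C^*$, while the second group is bounded by (i) as $\sum_{T_j \in Z}\dist(m_j,v') \le |Z|\cdot \max_{T_j\in Z}\dist(v',m_j) \le |Z|\,C^*/\alpha$. Combining yields the clean bound $C_{\text{direct}}(Z) \le (1 + |Z|/\alpha)\,C^*$. Since $\log_2 D \ge 1$ for any nontrivial network ($D \ge 2$), we get $1 + |Z|/\alpha \le 4(|Z|/\alpha + \log_2 D)$, which is exactly the claimed inequality (with room to spare).

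The only delicate point—hence the main obstacle—is justifying the two lower bounds drawn from the optimal schedule, in particular (i): one must argue that the object's total movement is at least its distance to the farthest meeting node, which relies on the basic fact that in any valid schedule the object is co-located with a transaction precisely when that transaction executes, so every $m_j$ lies on the object's path out of $v'$. A secondary concern is that $C^*$ is optimal for all of $\T$ while $Z$ is only a subset; this is harmless, since restricting the transaction-travel sum to $Z$ can only decrease it and bound (i) is a maximum that is unaffected by passing to a subset. I note that the $\log_2 D$ factor is in fact slack here: one could instead bucket $Z$ into $O(\log D)$ distance rings around $v'$ to surface the logarithm explicitly, but the argument above already delivers the (slightly stronger) bound directly.
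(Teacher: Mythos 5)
Your proof is correct, but it takes a genuinely different route from the paper's. The paper buckets the transactions of $Z$ into $O(\log D)$ distance annuli $Z_i$ around the object's node (distances in $[2^i,2^{i+1})$), lower-bounds $C^*$ separately on each annulus via a two-case analysis on whether $|Z_i|$ is below or above $\alpha$ (i.e., whether the optimum would rather move the transactions or the object across that annulus), and sums the per-annulus bounds $C(Z_i)\le 2(|Z_i|/\alpha+1)C^*$ --- the additive $\log_2 D$ is exactly the number of annuli. You instead fix an optimal schedule, split $C^*=\alpha L_{obj}+L_{trans}$, and charge each direct path $u_j\to v'$ through the meeting node $m_j$ via the triangle inequality, using the two global lower bounds $\alpha\max_j \dist(v',m_j)\le \alpha L_{obj}\le C^*$ and $\sum_j \dist(u_j,m_j)\le L_{trans}\le C^*$. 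Both of your ``delicate points'' are sound: co-location of object and transaction at execution time is forced by the exclusive-access requirement of the model, and passing to the subset $Z\subseteq\T$ only shrinks the transaction-travel sum. Your argument is both more elementary (no case analysis, no bucketing) and strictly stronger: $(1+|Z|/\alpha)C^*$ shows the $\log_2 D$ term in the lemma is pure slack, which would in turn shave the $4C^*\log_2 D$ term off Lemma~\ref{lemma:not-R}. The only blemish is the final constant-chasing step, which needs $D\ge 2$ (or $|Z|\ge\alpha/3$) to land exactly on the stated bound --- but the paper's own last inequality has the same edge case for $D$ close to $1$, so this is not a substantive gap.
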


\begin{proof}
Let $u$ be the node with the object $o$.
Let $Z_{i} \subseteq Z$, $i \geq 0$, be the set of transactions whose distance from $u$ is in the range $[2^i, 2^{i+1})$.
Let $C(Z_i)$ be the cost of moving the transactions $Z_i$ to $u$.
We have $C(Z_i) < |Z_i|2^{i+1}$.
    
In the best case scenario, all $Z_{i}$ transactions are gathered at a single node $v$ at distance $2^{i}$ from $u$.
Ideally, the optimal schedule would move $o$ in an intermediate node $x$ between $u$ and $v$ (or $x$ may coincide with $u$ or $v$).
Let $d_1$ be the distance between $u$ and $x$ and $d_2$ be the distance between $x$ and $v$, where $d_1 + d_2 = 2^i$, and $d_1, d_2 \geq 0$.
The cost of moving object $o$ to $x$ is $d_1 \alpha$.
The cost of moving the $Z_i$ transactions to $x$ is $d_2 |Z_i|$.
Thus, $C^* = d_1 \alpha + d_2 |Z_i|$.
We examine two cases:

\begin{itemize}
    \item    
    $|Z_i| < \alpha$:
    then $C^* > d_1 |Z_i| + d_2 |Z_i| = 2^i |Z_i|$.
    Hence, $C(Z_i) < |Z_i|2^{i+1} < 2 C^* < 2 (|Z_i|/\alpha + 1)C^*$.

    \item
    $|Z_i| \geq \alpha$:
    then $C^* \geq d_1 \alpha + d_2 \alpha = 2^i \alpha$.
    Hence, $C(Z_i) < |Z_i|2^{i+1} = |Z_i|2^{i+1}\alpha /\alpha$ $\leq 2 |Z_i| C^* / \alpha < 2 (|Z_i|/\alpha + 1)C^*$.
\end{itemize}
    
Hence, $C(Z_i) \leq 2 (|Z_i|/\alpha + 1)C^*$.
Since $0 \leq i \leq \lceil \log_2 D \rceil$, and $\sum_{i=0}^{\lceil \log_2 D \rceil} |Z_i| = |Z|$, we have:

\[C_{\text{direct}}(Z)\leq\sum_{i=0}^{\lceil \log_2 D \rceil} C(Z_i) \leq \sum_{i=0}^{\lceil \log_2 D \rceil} 2 (|Z_i|/\alpha + 1)C^* = 2 C^* \sum_{i=0}^{\lceil \log_2 D \rceil} (|Z_i|/\alpha  + 1)\]
    
$\hspace{11mm}= 2 C^* (|Z| /\alpha + 1 + \lceil \log_2 D \rceil)
\leq 4 (|Z|/\alpha + \log_2 D)C^*.$
\qed
\end{proof}

Let $q$ be the path of the object in an optimal schedule for all the 
transactions in $\T$ which results in the optimal schedule with cost $C^*$.
We can recursively decompose $q$ to a sequence of edge-disjoint 
subpaths $q_1, q_2, \ldots, q_k$, $k \geq 1$,
with parameter $\xi$
as follows. 
Let $q_1$ be the shortest prefix subpath of $q$ 
such that its length $|q_1|$ is at least 
$|q_1| \geq \xi$.
If such $q_1$ does not exist,
then $q_1 = q$.
Let $q'$ be the remaining subpath of $q$
where the first node of $q'$ is the same with the last node of $q_1$.
If $q'$ consists of at least two nodes,
repeat the same process recursively on $q'$.
We have the following properties for optimal path $q$:
\begin{itemize}

\item
The first and last node of $q$ is the first of $q_1$, and the last of $q_k$, respectively.
\item
The first node of $q_j$ is the same 
with the last node of $q_{j-1}$, where $j > 1$.
\item
For $1 \leq j < k$,
each $q_j$ has a length
at least 
$|q_j| \geq \xi$
such that 
the removal of the last edge splits $q_j$
into prefix $q'_j$ and suffix $q''_j$
with $|q'_j| < \xi$.
\item
For the last subpath $q_k$,
if it consists of three or more nodes, then
for its prefix $q'_j$, it holds $|q'_j| < \xi$;
otherwise, $|q_k| > 0$.
\item 
$k \leq |q| / \xi + 1$:
Since each of the first $k-1$ segments 
has length at least $\xi$,
we have $|q| \geq (k-1) \xi$. Therefore,
$k-1 \leq |q|/ \xi$, 
which implies $k \leq |q|/\xi + 1$.
\end{itemize}

\begin{lemma}
\label{lemma:subpath}
For a path $q_j$, $1 \leq j \leq k$,
with $v_1$ and $v_2$ being the first and last nodes of $q_j$,
any $c \geq 0$,
$N_{c}(q_j) \subseteq N_{2c}(\{v_1, v_2\})$.
\end{lemma}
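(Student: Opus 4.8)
The plan is to reduce the claimed set inclusion to a single pointwise estimate and then close with the triangle inequality. Fix $c \ge 0$ and an arbitrary $x \in N_c(q_j)$. By the definition of the neighborhood of a node set, there is a node $u$ lying on $q_j$ with $\dist(x,u) \le c$, so it suffices to show that $u$ is within distance $c$ of one of the two endpoints $v_1,v_2$. Indeed, if say $\dist(u,v_1)\le c$, then $\dist(x,v_1)\le \dist(x,u)+\dist(u,v_1)\le c+c=2c$, giving $x\in N_{2c}(v_1)\subseteq N_{2c}(\{v_1,v_2\})$, which is exactly what we want.

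So the heart of the argument is the claim that every node $u$ on $q_j$ satisfies $\min(\dist(u,v_1),\dist(u,v_2)) \le c$, and I would establish it from the structure of the decomposition rather than from a crude length bound. Let $w$ be the second-to-last node of $q_j$, so that the prefix $q'_j$ runs from $v_1$ to $w$ with $|q'_j|<\xi$, and the last edge is $(w,v_2)$. Every node $u\neq v_2$ then lies on $q'_j$, hence $\dist(u,v_1)$ is at most the length of the portion of $q'_j$ from $v_1$ to $u$, which is at most $|q'_j|<\xi$; the only remaining node is $v_2$ itself, for which $\dist(v_2,v_2)=0$. Thus $\max_{u\in q_j}\min(\dist(u,v_1),\dist(u,v_2)) < \xi$. (A coarser but self-contained alternative: since $u$ splits $q_j$ into two sub-subpaths of lengths $a,b$ with $a+b=|q_j|$, and graph distance never exceeds path length, one gets $\min(\dist(u,v_1),\dist(u,v_2))\le |q_j|/2$.)

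Combining the pointwise estimate with the triangle inequality yields the lemma, \emph{provided the radius dominates the offset}, i.e.\ $\xi \le c$ (equivalently $|q_j|\le 2c$ if one uses the half-length version). This is the point I expect to be delicate: the pointwise estimate itself is unconditional, but converting the resulting $N_{c+\xi}(\{v_1,v_2\})$ into the claimed $N_{2c}(\{v_1,v_2\})$ is precisely what consumes the extra factor of two in the radius, and it needs $c$ to be at least the decomposition scale $\xi$. I would therefore verify that the lemma is only invoked at radii $c\ge \xi$ — which is how the segments $q_j$ get matched to neighborhood scales in the covering argument that follows — and make that dependence explicit. Everything else is routine: just the triangle inequality together with the elementary fact that the distance between two nodes is at most the length of any path joining them.
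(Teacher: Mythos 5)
Your proof is correct and follows essentially the same route as the paper's: every node of $q_j$ other than $v_2$ lies on the prefix $q'_j$ of length less than $\xi$, hence is within $\xi$ of $v_1$, and the triangle inequality then yields the $2c$ radius. Your explicit caveat that this step requires $c \ge \xi$ is well taken --- the paper's proof silently assumes it (it asserts $\dist_G(v_1,v') \le c$ with no justification for arbitrary $c$, then switches to writing $\xi$), and the lemma as literally stated for ``any $c \ge 0$'' fails for small $c$ (e.g.\ $c=0$ with a distinct interior node); it is only invoked at radii on the order of $4\sigma\rho^{i}$, which dominate the decomposition scale $\xi$, so the argument is sound in context but the hypothesis $c \ge \xi$ should be made explicit as you propose.
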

\begin{proof}

This holds immediately if $q_j$ consists only of $v_1$ and $v_2$,
since $N_{c}(q_j) = N_{c}(\{v_1 \cup v_2\}) \subseteq N_{2c}(\{v_1, v_2\})$.
Suppose now that $q_j$ consists of three or more nodes.
Let $v'$ be the node of $q_j$ which is adjacent to $v_2$.
We have that $dist_G(v_1,v') \leq c$.
Thus, all the nodes of $q_j$ between $v_1$ and $v'$ are at distance less
than $\xi$ from $v_1$;
hence, the $\xi$-neighbors of all these nodes
are at a distance less than $2 \xi$ from $v_1$.
Thus, $N_{c}(q_j) \subseteq N_{2c}(v_1) 
\cup N_{c}(v_2) \subseteq N_{2c}(\{v_1, v_2\})$.
\qed
\end{proof}

\begin{lemma}
\label{lemma:basic-lower-bound}
For $|{\widehat T}_i| \geq 8I\alpha$, 
$C^* \geq |{\widehat T}_i| \rho^{i-1} / (16I)$. 
\end{lemma}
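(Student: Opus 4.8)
The plan is to lower-bound $C^*$ by separately accounting for the object's travel and the transactions' travel in an optimal schedule, and then to balance these two contributions against each other. Recall that in the analysis we may take $\beta = 1$, so $\gamma = \alpha$. Let $q$ be the object's path in an optimal schedule, so that $C^* \geq \alpha\,|q|$ from moving the object, while $C^*$ also dominates the total distance travelled by the transactions. I would fix the decomposition parameter $\xi = \rho^{i-1}/2$ and split $q$ into the edge-disjoint subpaths $q_1,\dots,q_k$ described just before Lemma~\ref{lemma:subpath}. Since each of the first $k-1$ subpaths has length at least $\xi$, we get $|q| \geq (k-1)\xi$, and hence the object contributes $C^* \geq \alpha (k-1)\xi$.

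Next I would exploit how the super-leader rule disperses $\widehat{T}_i$. Because every transaction of $\widehat{T}_i$ is assigned only at level $i$, it was still unassigned when level $i-1$ was processed; therefore every cluster of the level-$(i-1)$ partition $\mathcal{P}_{i-1}$ contains fewer than $2\gamma = 2\alpha$ transactions of $\widehat{T}_i$ (otherwise its leader would have become a super-leader at level $i-1$ and assigned them earlier). I then bound how many such transactions can lie close to $q$. The transactions within distance $\xi$ of $q$ sit in $\bigcup_j N_\xi(q_j)$, which by Lemma~\ref{lemma:subpath} is contained in $\bigcup_j N_{2\xi}(\{v_1^j,v_2^j\})$; as consecutive subpaths share endpoints, this is a union of at most $k+1$ balls of radius $2\xi = \rho^{i-1} = r_{i-1}$. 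By the defining property of the $(r_{i-1},\sigma,I)$-partition, each such ball meets at most $I$ clusters of $\mathcal{P}_{i-1}$, so the ``near'' transactions occupy at most $(k+1)I$ clusters and hence number fewer than $2(k+1)I\alpha$. Consequently at least $|\widehat{T}_i| - 2(k+1)I\alpha$ transactions of $\widehat{T}_i$ lie at distance more than $\xi$ from $q$; each must travel more than $\xi$ to reach the object (which only ever sits on $q$), giving the transaction contribution $C^* > \bigl(|\widehat{T}_i| - 2(k+1)I\alpha\bigr)\xi$.

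Finally I would combine the two inequalities $C^* \geq \alpha(k-1)\xi$ and $C^* \geq (|\widehat{T}_i| - 2(k+1)I\alpha)\xi$, both of which hold for the particular $k$ realized by the optimal path. The first bound increases and the second decreases in $k$, so $C^*$ is at least the value at their crossover, which a short computation places at $\xi\,(|\widehat{T}_i| - 4I\alpha)/(1+2I)$. Substituting the hypothesis $|\widehat{T}_i| \geq 8I\alpha$ (so $|\widehat{T}_i| - 4I\alpha \geq |\widehat{T}_i|/2$), using $1+2I \leq 3I$, and recalling $\xi = \rho^{i-1}/2$, yields $C^* \geq |\widehat{T}_i|\,\rho^{i-1}/(16I)$, as claimed; this is precisely where the threshold $8I\alpha$ and the factor $16I$ originate. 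In the degenerate regime $\rho^{i-1} > D$ the radius $r_{i-1}$ saturates at $D$, and the statement should be read with that truncation.

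I expect the main obstacle to be the geometric covering step, namely passing from the neighborhood of the optimal path to a bounded number of level-$(i-1)$ clusters. This is where Lemma~\ref{lemma:subpath} and the $(r,\sigma,I)$-partition property do the real work, and it is the step whose constants must be tracked carefully: the sharing of endpoints between consecutive subpaths, which yields $k+1$ rather than $2k$ balls, is exactly what makes the final constant come out to $16I$ rather than something weaker. By comparison, the concluding balance between the object and transaction contributions is a routine one-parameter optimization.
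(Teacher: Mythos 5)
Your argument is correct, and although the paper omits its own proof of this lemma, it is plainly the intended one: you use exactly the scaffolding the paper erects immediately beforehand (the decomposition of the optimal object path $q$ with parameter $\xi=\rho^{i-1}/2$, Lemma~\ref{lemma:subpath} to cover $N_\xi(q)$ by $O(k)$ balls of radius $2\xi=r_{i-1}$, and the $I$-intersection property of $\mathcal{P}_{i-1}$ together with the fact that each level-$(i-1)$ cluster holds fewer than $2\alpha$ transactions of ${\widehat T}_i$), and your final balance even yields the slightly stronger constant $12I$, which implies the stated $16I$.
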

The proof of Lemma~\ref{lemma:basic-lower-bound} is omitted due to space constraints.
\qed

\begin{lemma}
\label{lemma:doubling-special}
The number of $\rho^i$-neighborhoods that are required to cover 
a $8 \sigma \rho^{i}$-neighborhood of $G$ is at most $\zeta := 2^{\delta \log(8\sigma)}$.  
\end{lemma}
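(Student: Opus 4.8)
The plan is to apply the doubling dimension property iteratively. Recall that a $8\sigma\rho^i$-neighborhood is a ball of radius $8\sigma\rho^i$, and by the definition of doubling dimension, any ball of radius $r$ can be covered by at most $2^\delta$ balls of radius $r/2$. First I would express the target radius ratio as a power of $2$: since $\rho = 4\sigma$, the ratio between the large ball's radius $8\sigma\rho^i$ and the small neighborhood radius $\rho^i$ is exactly $8\sigma$. The goal is to reduce the radius from $8\sigma\rho^i$ down to $\rho^i$ by repeatedly halving, tracking how many small balls accumulate.

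Next, I would count the number of halving steps needed. Each application of the doubling dimension definition halves the radius while multiplying the number of covering balls by $2^\delta$. To shrink the radius from $8\sigma\rho^i$ to $\rho^i$ requires reducing by a factor of $8\sigma$, which takes $\lceil \log_2(8\sigma) \rceil$ halving steps (each step halves, so after $m$ steps the radius is reduced by factor $2^m$, and we need $2^m \geq 8\sigma$). After these $m = \log(8\sigma)$ steps, the total number of balls is at most $(2^\delta)^m = 2^{\delta \log(8\sigma)} = \zeta$, and each has radius at most $\rho^i$. This gives the stated bound $\zeta = 2^{\delta\log(8\sigma)}$.

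The key steps in order would be: (1) identify the ball as having radius $8\sigma\rho^i$ and the target covering balls as having radius $\rho^i$; (2) observe the ratio of radii is $8\sigma$, independent of $i$; (3) apply the doubling-dimension covering property $\log(8\sigma)$ times, noting that after each application the radius halves and the ball count grows by a factor $2^\delta$; (4) conclude the total count is $2^{\delta\log(8\sigma)} = \zeta$. Since $\sigma$ is a constant for graphs of constant doubling dimension, $\zeta$ is also a constant.

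The main obstacle I anticipate is the non-integrality of $\log_2(8\sigma)$: in general $8\sigma$ need not be a power of two, so the number of halving steps should be $\lceil \log_2(8\sigma)\rceil$ rather than exactly $\log_2(8\sigma)$, and after $\lceil\log_2(8\sigma)\rceil$ halvings the covering balls have radius at most $8\sigma\rho^i / 2^{\lceil\log_2(8\sigma)\rceil} \leq \rho^i$, which is fine. Thus the honest bound is $2^{\delta\lceil\log_2(8\sigma)\rceil}$; the statement writes $\zeta = 2^{\delta\log(8\sigma)}$, absorbing the ceiling into the constant since $\sigma$ is constant. I would make sure the exponent is handled cleanly and note that any slack from the ceiling only affects the constant $\zeta$, which is all that matters for the asymptotic analysis.
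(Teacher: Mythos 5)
Your proposal is correct and follows essentially the same argument as the paper: repeatedly apply the doubling-dimension covering property to halve the radius from $8\sigma\rho^i$ down to $\rho^i$ in $\log(8\sigma)$ steps, multiplying the ball count by $2^\delta$ each time to obtain $\zeta = 2^{\delta\log(8\sigma)}$. Your remark about the ceiling on $\log_2(8\sigma)$ when $8\sigma$ is not a power of two is a point the paper's proof glosses over, and handling it as you describe only changes the constant.
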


\begin{proof}

Let $G$ be a graph with doubling dimension $\delta$.
Consider a ball of radius $8\sigma\rho^i$ centered at a vertex $v$ in $G$, denoted as $B(v, 8\sigma\rho^i)$. We want to cover this ball with $\rho^i$-neighborhoods.
From the definition, any ball of radius $2r$ can be covered by at most $2^\delta$ balls of radius $r$.
Here, $r = 4\sigma\rho^i$. Therefore, the ball $B(v, 8\sigma\rho^i)$ can be covered by at most $2^\delta$ balls of radius $4\sigma\rho^i$. Similarly, to cover $4\sigma\rho^i$, we need $2^\delta$ balls of radius $2\sigma\rho^i$ and so on. We can estimate the total number of balls as $\prod_{i=1}^{k} 2^{\delta}=(2^\delta)^k$,
where $\frac{8\sigma\rho^i}{2^k}=\rho^i$. Thus, $k=\log(8\sigma)$.
Hence, $2^{\delta k} = 2^{\delta \log(8\sigma)}$.
\qed
\end{proof}

\begin{lemma}
\label{lemma:L-bound}
${\widehat L}_i \subseteq N_{4\sigma \rho^{i}}(q)$.
\end{lemma}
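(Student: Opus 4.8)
The plan is to argue by contradiction through a local exchange on the optimal schedule. Fix a super-leader $v \in {\widehat L}_i$; by construction $v = leader(X)$ for some cluster $X$ of the level-$i$ partition ${\mathcal P}_i$, and $X$ contains at least $2\gamma = 2\alpha$ transactions that chose $v$ as their dedicated super-leader. Since $X$ has diameter at most $\sigma\rho^i$, every one of these transactions sits at a node within distance $\sigma\rho^i$ of $v$. Recalling $\rho = 4\sigma$, the target radius is exactly $4\sigma\rho^i = \rho^{i+1}$, so the claim $v \in N_{4\sigma\rho^i}(q)$ is the assertion $\dist(v,q) \le \rho^{i+1}$.

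Suppose instead $\dist(v,q) > 4\sigma\rho^i$. First I would propagate this to the whole cluster: by the triangle inequality and $\diam(X) \le \sigma\rho^i$, every transaction node $u \in X$ satisfies $\dist(u,q) \ge \dist(v,q) - \sigma\rho^i > 3\sigma\rho^i$. In the optimal schedule each transaction must share a node with the object in order to execute, and the object only ever occupies nodes of $q$; hence every transaction of $X$ must travel more than $3\sigma\rho^i$ to reach its meeting point on $q$. Summing over the $\ge 2\alpha$ transactions, the movement cost that the optimal schedule must pay on account of $X$ alone exceeds $2\alpha \cdot 3\sigma\rho^i = 6\alpha\sigma\rho^i$.

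Against this I would construct a competing schedule and derive a contradiction with optimality. The idea is to reroute the object so that it additionally visits $v$ --- inserting $v$ into the object tour at the point $p$ of $q$ closest to $v$ --- and to gather all of $T(X)$ at $v$ instead of dispatching them to $q$; everything outside $X$ is left untouched. Gathering $T(X)$ at $v$ costs at most $|T(X)| \cdot \sigma\rho^i$, while the transactions no longer pay their $> 3\sigma\rho^i$ trips to $q$, a saving exceeding $6\alpha\sigma\rho^i$. The net change is therefore the added object-routing cost minus this transaction saving, and the goal is to show it is negative.

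The main obstacle is precisely this cost balance: a naive round-trip detour to $v$ costs the object about $2\,\dist(v,q)\,\alpha$, which at the boundary count $|T(X)| = 2\alpha$ is of the same order as the saving and need not be dominated by it. I therefore expect the real work to lie in charging the object's excursion more tightly --- using that the super-leader criterion guarantees it is locally cheaper to route the object to $v$ than to move $T(X)$ out of $X$ (the very inequality that made $v$ a super-leader), and that the detour can be absorbed as a one-way insertion bounded against two consecutive tour nodes rather than a full round trip. Making these constants line up against the threshold $2\gamma$ and the relation $\rho = 4\sigma$ is the delicate step; once the added object cost is shown to be at most the transaction saving, optimality of $q$ is violated and the containment $\dist(v,q) \le 4\sigma\rho^i$ follows.
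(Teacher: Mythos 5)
Your plan is the same exchange argument the paper uses: assume $\dist(v,q) = d > 4\sigma\rho^{i}$ for some $v \in {\widehat L}_i$, note that every transaction of the cluster $X$ then lies at distance at least $d - \sigma\rho^{i}$ from $q$, and contradict optimality by gathering $T(X)$ at $v$ and routing the object to $v$ instead. But you stop exactly at the step that constitutes the proof, and the bounds you set up cannot close it. By lower-bounding each transaction's trip to $q$ by the \emph{constant} $3\sigma\rho^{i}$ you discard the dependence on $d$: your claimed saving is the fixed quantity $6\alpha\sigma\rho^{i}$, whereas the object's excursion to $v$ costs at least $d\alpha$ even one-way, and $d$ can be arbitrarily larger than $\sigma\rho^{i}$. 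No amount of constant-chasing against the threshold $2\gamma$ or the relation $\rho = 4\sigma$ rescues that version of the comparison. The paper keeps both sides as functions of $d$: the optimal schedule pays $c_1 \geq (d-\sigma\rho^{i})\cdot 2\alpha$ to move the at least $2\alpha$ transactions of $X$ to $q$, while the alternative pays $c_2 \leq 2\alpha\sigma\rho^{i} + d\alpha$ (intra-cluster gathering plus a one-way object move of length $d$), so that $c_1 - c_2 \geq (d - 4\sigma\rho^{i})\alpha > 0$ precisely because $d > 4\sigma\rho^{i}$. The coefficient $2\alpha$ multiplying $d$ on the saving side versus the coefficient $\alpha$ on the detour side is what makes the inequality work; your formulation loses that factor of $d$ entirely.

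You did correctly identify the remaining subtlety --- that a round-trip detour would cost about $2d\alpha$ and make the comparison break even --- but flagging a difficulty is not resolving it. The paper's accounting charges the object's excursion one-way ($d\alpha$, from the nearest point of $q$ to $v$), and it is under that accounting that $c_2 < c_1$ holds. As written, your proposal is an accurate outline plus an accurate list of the obstacles, but the inequality that actually proves the lemma is absent, and the specific quantitative bounds you committed to would not yield it.
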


\begin{proof}

Suppose that there is $v \in {\widehat L}_i$,
such that $v \notin N_{4\sigma \rho^{i}}(q)$.
Let $X$ be the cluster of $v$ at partition $\mathcal{P}_{i}$.
Let $d$ be the distance of $v$ to the closest node in $q$;
note that $d > 4\sigma \rho^{i}$.
The distance between any node in $X$ to its closest node in $q$ 
is at least $d - \sigma \rho^i$.
Let $T(X)$ be the transactions with home node in $X$.
Since $v$ is a super-leader,
$|T(X)| \geq 2\alpha$.
The cost $c_1$ of moving these transactions to $q$ is
$$c_1 
\geq (d - \sigma \rho^i) 2 \alpha 
= (2 d - 2 \sigma \rho^i) \alpha 
> (d + 4\sigma \rho^i - 2 \sigma \rho^i ) \alpha 
= (d + 2 \sigma \rho^i) \alpha \ .$$
On the other hand, 
the cost $c_2$ of gathering the $T(X)$ transactions to 
$v$ and then having the object move from $q$ to $v$ 
is $c_2 \leq 2 \alpha \sigma \rho^i + d \alpha < c_1$.
Therefore, the path $q$ is not optimal; a contradiction.
Thus, ${\widehat L}_i \subseteq N_{4\sigma \rho^{i}}(q)$.
\qed
\end{proof}

Using Lemmas~\ref{lemma:subpath}, \ref{lemma:doubling-special}, and \ref{lemma:L-bound}, we obtain the following two results:

\begin{lemma}
\label{lemma:large-q}
For $|{\widehat T}_i| \geq 4I\alpha$
and $|q| \geq \rho^{i-1}/2$,
$Tour^*({\widehat L}_i) \leq 74 \zeta I \sigma \rho |q|$.
\qed
\end{lemma}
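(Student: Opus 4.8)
The plan is to build an explicit closed tour that visits every node of ${\widehat L}_i$ and whose length is at most $74\zeta I\sigma\rho|q|$; since this upper-bounds the optimal tour, it proves the claim. The backbone of the tour will be the optimal object path $q$ itself (which begins at the object's home node, the node the tour must also include), and I would graft short detours onto it to reach the super-leaders. First I would invoke Lemma~\ref{lemma:L-bound} to place every leader of ${\widehat L}_i$ inside $N_{4\sigma\rho^i}(q)$, so each leader lies within distance $4\sigma\rho^i$ of some point of $q$. Detouring to a leader and returning therefore costs at most $2\cdot 4\sigma\rho^i = 8\sigma\rho^i$, while traversing $q$ and closing the loop costs at most $2|q|$; hence the whole tour is bounded by $2|q| + 8\sigma\rho^i\,|{\widehat L}_i|$. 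The remaining work is to bound $|{\widehat L}_i|$.

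To count the leaders I would use the recursive decomposition $q = q_1,\dots,q_k$ with parameter $\xi = \rho^{i-1}/2$. Because $|q| \geq \rho^{i-1}/2$, the general estimate $k \leq |q|/\xi + 1$ collapses to $k \leq 4|q|/\rho^{i-1}$. For each segment $q_j$ with endpoints $v_1^j,v_2^j$, applying Lemma~\ref{lemma:subpath} with $c = 4\sigma\rho^i$ (here $\xi \leq 4\sigma\rho^i$, so the lemma applies) gives $N_{4\sigma\rho^i}(q_j) \subseteq N_{8\sigma\rho^i}(\{v_1^j,v_2^j\})$. Lemma~\ref{lemma:doubling-special} covers each $8\sigma\rho^i$-ball by at most $\zeta$ balls of radius $\rho^i$, so the union over the two endpoints is covered by $2\zeta$ such balls. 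Finally, the $(r_i,\sigma,I)$-partition property at level $i$ (with $r_i = \rho^i$) forces any radius-$\rho^i$ ball to meet at most $I$ clusters and hence contain at most $I$ level-$i$ leaders, yielding at most $2\zeta I$ leaders near $q_j$.

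Summing over the $k$ segments, and noting ${\widehat L}_i \subseteq N_{4\sigma\rho^i}(q) = \bigcup_j N_{4\sigma\rho^i}(q_j)$ so every leader is charged to some segment, I get $|{\widehat L}_i| \leq 2\zeta I k \leq 8\zeta I |q|/\rho^{i-1}$. Substituting into the tour bound gives $2|q| + 8\sigma\rho^i \cdot 8\zeta I|q|/\rho^{i-1} = 2|q| + 64\zeta I\sigma\rho|q|$, and since $\zeta,I,\sigma,\rho \geq 1$ the first term is absorbed, producing a bound of at most $66\zeta I\sigma\rho|q| \leq 74\zeta I\sigma\rho|q|$.

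The main obstacle I anticipate is making the covering-to-counting step airtight: I must verify that a radius-$\rho^i$ ball containing a leader genuinely intersects that leader's own cluster, so the $I$-bound on intersected clusters really translates into an $I$-bound on leaders, and that the neighborhood radii line up exactly, namely Lemma~\ref{lemma:L-bound} produces $4\sigma\rho^i$, Lemma~\ref{lemma:subpath} doubles it to the $8\sigma\rho^i$ demanded by Lemma~\ref{lemma:doubling-special}, and the covering balls have radius precisely $r_i = \rho^i$. I would also confirm that the hypothesis $|{\widehat T}_i| \geq 4I\alpha$ serves only to guarantee that level $i$ carries enough transactions for ${\widehat L}_i$ to be a meaningful, non-pruned set; the quantitative tour estimate itself rests on $|q| \geq \rho^{i-1}/2$ together with Lemmas~\ref{lemma:subpath}, \ref{lemma:doubling-special}, and~\ref{lemma:L-bound}.
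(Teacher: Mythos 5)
Your proposal is correct and follows exactly the route the paper indicates (the paper omits the proof but states it follows from Lemmas~\ref{lemma:subpath}, \ref{lemma:doubling-special}, and \ref{lemma:L-bound}, which is precisely the combination you use): graft detours onto $q$, decompose $q$ with $\xi=\rho^{i-1}/2$ to bound $k$, and count leaders per segment via the endpoint-neighborhood covering and the $I$-intersection property of the partition. Your arithmetic checks out and lands at $66\zeta I\sigma\rho|q|$, comfortably within the stated constant $74$.
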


\begin{lemma}
\label{lemma:small-q}
For $|{\widehat T}_i|\geq 4I\alpha$ and $|q| < \rho^{i-1}/2$,
$Tour^*({\widehat L}_i) \leq 36 C^* \rho \zeta \sigma / \alpha$.
\qed
\end{lemma}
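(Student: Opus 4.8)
The plan is to sandwich $Tour^*({\widehat L}_i)$ between two estimates: an upper bound obtained by confining all level-$i$ super-leaders to a small ball and counting them, and a lower bound on $C^*$ that exploits the small-$q$ hypothesis to force most transactions of ${\widehat T}_i$ to travel far. The upper bound is essentially a packing argument via the doubling dimension; the lower bound, which I expect to be the main obstacle, comes from the promotion rule of the algorithm.

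For the upper bound I would first localize ${\widehat L}_i$. By Lemma~\ref{lemma:L-bound} every super-leader lies within $4\sigma\rho^i$ of $q$, and since $|q| < \rho^{i-1}/2$ every node of $q$ (including the object's home $v'$, the start of $q$) lies within $\rho^{i-1}/2$ of $v'$. The triangle inequality, together with $\rho^{i-1}/2 \le 4\sigma\rho^i$, then places all of ${\widehat L}_i \cup \{v'\}$ inside the ball $N_{8\sigma\rho^i}(v')$. By Lemma~\ref{lemma:doubling-special} this ball is covered by $\zeta$ balls of radius $\rho^i$. A ball of radius $\rho^i = r_i$ centered at a node $c$ is exactly $N_{r_i}(c)$, which by the partition property of $\mathcal{P}_i$ meets at most $I$ clusters; since distinct super-leaders lead distinct level-$i$ clusters, each such ball holds at most $I$ of them. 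Hence $|{\widehat L}_i| \le \zeta I$, the tour visits at most $\zeta I + 1$ nodes all lying in a region of diameter $16\sigma\rho^i$, and taking consecutive nodes at distance at most this diameter gives $Tour^*({\widehat L}_i) \le (\zeta I + 1)\,16\sigma\rho^i \le 32\,\zeta I \sigma\rho^i$.

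The hard part is the lower bound on $C^*$, and this is where the short-path hypothesis must be used. The key structural fact is that every transaction of ${\widehat T}_i$ was still unassigned at level $i-1$, so each cluster of $\mathcal{P}_{i-1}$ can contain fewer than $2\alpha$ of them (otherwise its leader would have been promoted to a super-leader at level $i-1$). I would then count how many can sit close to $q$: any transaction within $\rho^{i-1}/2$ of $q$ lies in $N_{\rho^{i-1}}(v') = N_{r_{i-1}}(v')$ (again using $|q| < \rho^{i-1}/2$), and this ball meets at most $I$ clusters of $\mathcal{P}_{i-1}$, each holding fewer than $2\alpha$ transactions of ${\widehat T}_i$. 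Thus fewer than $2I\alpha$ of these transactions lie within $\rho^{i-1}/2$ of $q$, so at least $|{\widehat T}_i| - 2I\alpha \ge 2I\alpha$ of them must each travel more than $\rho^{i-1}/2$ to reach the object (whose every position lies on $q$). Since these movement costs are all part of $C^*$, I obtain $C^* > 2I\alpha \cdot \rho^{i-1}/2 = I\alpha\rho^{i-1}$.

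Combining the two bounds finishes the proof: from $C^* > I\alpha\rho^{i-1}$ we get $36\,C^*\rho\zeta\sigma/\alpha > 36\,\zeta\sigma I\rho^i$, which dominates the upper bound $Tour^*({\widehat L}_i) \le 32\,\zeta I \sigma\rho^i$, yielding $Tour^*({\widehat L}_i) \le 36\,C^*\rho\zeta\sigma/\alpha$. I expect the two delicate points to be the bookkeeping that makes the localization radius come out to exactly $8\sigma\rho^i$ (so Lemma~\ref{lemma:doubling-special} applies verbatim) and the promotion-rule argument that each level-$(i-1)$ cluster carries at most $2\alpha$ transactions of ${\widehat T}_i$; the latter is precisely what converts a confined object path into a genuine $\Theta(I\alpha\rho^{i-1})$ lower bound on $C^*$, which is stronger than the generic estimate of Lemma~\ref{lemma:basic-lower-bound} and is what the small-$q$ regime buys us.
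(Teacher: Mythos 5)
Your proof is correct and follows essentially the route the paper intends (it explicitly derives this lemma from Lemmas~\ref{lemma:doubling-special} and \ref{lemma:L-bound}, which are exactly your two pillars; the short-$q$ hypothesis lets you collapse the whole path to a single ball around $v'$, so Lemma~\ref{lemma:subpath} is not needed). Both halves check out --- the packing bound $|{\widehat L}_i|\le \zeta I$ giving $Tour^*({\widehat L}_i)\le 32\zeta I\sigma\rho^i$, and the promotion-rule argument that at most $2I\alpha$ transactions of ${\widehat T}_i$ lie within $\rho^{i-1}/2$ of $q$, giving $C^* > I\alpha\rho^{i-1}$ --- and the constants combine with room to spare.
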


Let $R$ be the set of indices in range $0, \ldots, h$ such that for each $i \in R$,
$|{\widehat T}_i|\geq 8I\alpha$ and ${\widehat T} = \bigcup_{i \in R} {\widehat T}_i$.
We have a set of dedicated super-leaders $S_f = \bigcup_{i \in R} {\widehat L}_i$.
Let $Tour^*(S_f)$
denote the cost of the optimal TSP tour
covering all the nodes in $S_f$,
including the object's initial position, and $Tour(S_f)$ be the actual cost of the object tour incurred by Algorithm \ref{alg:offline-single-obj}. 
(Without loss of generality, assume that $Tour^*(S_f) > 0$, since otherwise we would only have the direct move cost for the transactions.) We establish the following three Lemmas (proofs are omitted for brevity).

\begin{lemma}
\label{lemma:tour-bound}
$\alpha \cdot Tour^*(S_f) \leq 74 (h+1) \zeta I \sigma \rho C^*$.
\qed
\end{lemma}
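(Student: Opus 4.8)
The plan is to bound the total optimal tour cost $Tour^*(S_f)$ by summing the per-level tour costs $Tour^*(\widehat{L}_i)$ over the relevant index set $R$, and then to apply the two cases already established in Lemmas~\ref{lemma:large-q} and~\ref{lemma:small-q}. First I would observe that, since $S_f = \bigcup_{i \in R} \widehat{L}_i$, a tour visiting all nodes of $S_f$ can be obtained (up to constant overhead) by concatenating tours for each level set $\widehat{L}_i$; hence $Tour^*(S_f) \leq \sum_{i \in R} Tour^*(\widehat{L}_i)$, possibly after an argument that the optimal global tour is no more expensive than the sum of the level-wise optimal tours sharing the object's home node $v'$. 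This reduces the problem to bounding each $Tour^*(\widehat{L}_i)$ individually.

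Next I would split the index set $R$ according to the dichotomy used in the two preceding lemmas: levels $i$ where the optimal object path is ``long'' relative to the level scale, i.e. $|q| \geq \rho^{i-1}/2$, and levels where it is ``short'', i.e. $|q| < \rho^{i-1}/2$. For the long case, Lemma~\ref{lemma:large-q} gives $Tour^*(\widehat{L}_i) \leq 74\zeta I \sigma \rho |q|$, and since $|q| \leq C^*/\alpha$ (the optimal object path length is at most the optimal cost divided by the per-unit object cost $\alpha$), each such term is bounded by $74\zeta I \sigma \rho \, C^*/\alpha$. For the short case, Lemma~\ref{lemma:small-q} directly yields $Tour^*(\widehat{L}_i) \leq 36 C^* \rho \zeta \sigma / \alpha$. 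Thus every level contributes at most $74 \zeta I \sigma \rho\, C^*/\alpha$ to the sum (the long-case constant dominating the short-case one, and $I \geq 1$).

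Finally, since the index set $R \subseteq \{0,1,\dots,h\}$ contains at most $h+1$ levels, summing over all $i \in R$ gives
\[
Tour^*(S_f) \leq \sum_{i \in R} Tour^*(\widehat{L}_i) \leq (h+1)\cdot 74 \zeta I \sigma \rho\, C^*/\alpha,
\]
and multiplying through by $\alpha$ delivers the claimed bound $\alpha \cdot Tour^*(S_f) \leq 74(h+1)\zeta I \sigma \rho\, C^*$.

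The main obstacle I anticipate is justifying the decomposition $Tour^*(S_f) \leq \sum_{i \in R} Tour^*(\widehat{L}_i)$ cleanly: one must argue that stitching together the per-level optimal tours through the common anchor $v'$ (the object's initial position, which is included in each $Tour^*(\widehat{L}_i)$ by convention) yields a valid tour of $S_f$ whose cost is at most the sum, without incurring an extra multiplicative factor. The inclusion of the object's home node in each level tour is precisely what makes this concatenation legitimate, since the combined walk can return to $v'$ between consecutive level sub-tours; I would make this bookkeeping explicit and confirm that no case in the per-level lemmas has already absorbed this overhead.
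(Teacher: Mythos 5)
Your proposal is correct and follows what is evidently the paper's intended route: the paper omits this proof for brevity, but the exact constants ($74$, the factor $h+1$ from $|R|\leq h+1$, and $\zeta I \sigma \rho$) confirm that the argument is precisely your concatenation of the per-level tours at the shared anchor $v'$, followed by Lemma~\ref{lemma:large-q} with $|q|\leq C^*/\alpha$ in the long-path case and Lemma~\ref{lemma:small-q} (whose constant $36\rho\zeta\sigma/\alpha$ is dominated by $74\zeta I\sigma\rho/\alpha$ since $I\geq 1$) in the short-path case. Your closing concern about the stitching step is handled exactly as you suggest, since each $Tour^*({\widehat L}_i)$ includes the object's initial position by definition.
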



\begin{lemma}
\label{lemma:in-R}
For transactions ${\widehat T}$, 
cost of Algorithm \ref{alg:offline-single-obj}
is $C({\widehat T}) \leq 74 \frac {Tour(S_f)}{Tour^*(S_f)} (h + 1) \zeta I \sigma \rho C^*$.
\qed
\end{lemma}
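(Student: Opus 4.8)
My plan is to split the algorithm's total cost on the surviving transactions $\widehat{T}$ into two clearly separated contributions and bound each against $C^*$. Concretely, I would write $C(\widehat{T}) = \alpha \cdot Tour(S_f) + M$, where $\alpha \cdot Tour(S_f)$ is the cost of the object traversing the computed TSP tour over $S_f$ (each unit edge costing $\alpha$), and $M$ is the total cost of gathering every transaction of $\widehat{T}$ at its dedicated super-leader. The two terms correspond exactly to the last two movement lines of Algorithm~\ref{alg:offline-single-obj}; the pruned and never-assigned transactions are excluded from $\widehat{T}$ by the definition of $R$ and are accounted for separately via Lemma~\ref{lemma:direct}, so this decomposition is complete for $\widehat{T}$.

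For the object-movement term the essential step is to factor out the tour approximation ratio: I would write $\alpha \cdot Tour(S_f) = \frac{Tour(S_f)}{Tour^*(S_f)} \cdot \alpha \cdot Tour^*(S_f)$ and then invoke Lemma~\ref{lemma:tour-bound}, which already bounds $\alpha \cdot Tour^*(S_f)$ by $74(h+1)\zeta I \sigma \rho\, C^*$. This immediately gives $\alpha \cdot Tour(S_f) \leq \frac{Tour(S_f)}{Tour^*(S_f)} \cdot 74(h+1)\zeta I \sigma \rho\, C^*$, which is precisely the form claimed in the statement, with the ratio $\frac{Tour(S_f)}{Tour^*(S_f)}$ carrying the gap between the realized universal-TSP tour and the optimal tour.

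For the gathering term $M$ I would argue it is a strictly lower-order contribution. A transaction whose dedicated super-leader sits at level $i$ lies inside a level-$i$ cluster of diameter at most $\sigma \rho^i$, so it travels at most $\sigma \rho^i$ (recall $\beta = 1$), and summing over $\widehat{T}_i$ gives at most $|\widehat{T}_i|\,\sigma \rho^i$ at that level. Since every $i \in R$ satisfies $|\widehat{T}_i| \geq 8I\alpha$, Lemma~\ref{lemma:basic-lower-bound} applies and yields $|\widehat{T}_i|\,\rho^{i-1} \leq 16 I\, C^*$, hence $|\widehat{T}_i|\,\sigma \rho^i \leq 16 I \sigma \rho\, C^*$; summing over the at most $h+1$ levels of $R$ bounds $M$ by $16(h+1) I \sigma \rho\, C^*$. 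Because $\frac{Tour(S_f)}{Tour^*(S_f)} \geq 1$ and $\zeta \geq 1$, this $M$-bound is dominated by the object-movement bound and is absorbed into the leading constant.

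The main obstacle I anticipate is not a deep idea but the bookkeeping needed to make the constant come out cleanly: I must verify that the cluster-diameter bound $\sigma \rho^i$ is indeed the correct handle on the per-transaction gathering distance at each level, that Lemma~\ref{lemma:tour-bound}'s constant passes through the ratio factoring untouched, and that the gathering term $M$ is genuinely subsumed rather than adding to the constant. The cleanest route is to charge the dominant cost entirely to the object tour and show $M$ is of the same order $O((h+1) I \sigma \rho\, C^*)$, so that the stated $74\,\frac{Tour(S_f)}{Tour^*(S_f)}(h+1)\zeta I \sigma \rho\, C^*$ bound holds with the $M$ contribution safely inside the slack provided by $\zeta$ and the tour ratio.
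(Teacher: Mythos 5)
Your proposal is essentially the intended argument: the paper omits this proof, but the constant $74$ in the statement is exactly the constant of Lemma~\ref{lemma:tour-bound}, so the core step is precisely your factoring $\alpha\cdot Tour(S_f) = \frac{Tour(S_f)}{Tour^*(S_f)}\cdot\alpha\cdot Tour^*(S_f)$ followed by Lemma~\ref{lemma:tour-bound}, and your bound on the gathering term $M$ via the cluster diameter $\sigma\rho^i$ and Lemma~\ref{lemma:basic-lower-bound} is the right complementary piece. The only quibble is the final bookkeeping: your two bounds sum to $\bigl(74\zeta\cdot\frac{Tour(S_f)}{Tour^*(S_f)} + 16\bigr)(h+1)I\sigma\rho\,C^*$, which is not literally at most $74\,\frac{Tour(S_f)}{Tour^*(S_f)}(h+1)\zeta I\sigma\rho\,C^*$ unless one argues for extra slack (e.g., $\zeta\ge 8$ gives a constant of at most $76\zeta$ rather than $74\zeta$); this does not affect the $O(\cdot)$ conclusion used in Theorem~\ref{thereom:main}, but you should state the absorbed constant honestly rather than claiming $74$ exactly.
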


\begin{lemma}
\label{lemma:not-R}
For transactions $\T \setminus {\widehat T}$, 
cost of Algorithm \ref{alg:offline-single-obj}
is $C_{\text{direct}}(\T \setminus {\widehat T}) \leq 36 (h+2) I C^* + 4 C^* \log_2 D$.
\qed
\end{lemma}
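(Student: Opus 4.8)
The plan is to invoke Lemma~\ref{lemma:direct} on the set $\T \setminus {\widehat T}$ and then bound its cardinality in terms of $\alpha$, $I$, and $h$. Algorithm~\ref{alg:offline-single-obj} relocates every transaction in $\T \setminus {\widehat T}$ directly to the object's home node $v'$ (either because it was never assigned a dedicated super-leader, or because its super-leader's level was pruned), so the cost charged to this set is exactly $C_{\text{direct}}(\T \setminus {\widehat T})$. Lemma~\ref{lemma:direct} then gives $C_{\text{direct}}(\T \setminus {\widehat T}) \leq 4(|\T \setminus {\widehat T}|/\alpha + \log_2 D)\,C^*$, so the whole argument reduces to showing that $|\T \setminus {\widehat T}|/\alpha$ is $O(I h)$.

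To count $|\T \setminus {\widehat T}|$, I would partition these transactions by how they fail to lie in ${\widehat T} = \bigcup_{i\in R}{\widehat T}_i$. First, the set $U$ of transactions never assigned any dedicated super-leader: at the top level of $H$ the whole graph is a single cluster, so if $U$ had $2\gamma = 2\alpha$ or more unassigned transactions, the top-level leader would become a super-leader and absorb all of them; hence $|U| < 2\alpha$. Second, the transactions assigned to super-leaders at pruned levels, namely $\bigcup_{i \notin R} {\widehat T}_i$: by the definition of $R$, each such level satisfies $|{\widehat T}_i| < 8I\alpha$, and there are at most $h+2$ levels in the hierarchy, so $\sum_{i\notin R}|{\widehat T}_i| < 8I\alpha(h+2)$. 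Combining, $|\T \setminus {\widehat T}| = |U| + \sum_{i\notin R}|{\widehat T}_i| < 2\alpha + 8I\alpha(h+2)$, and therefore $|\T \setminus {\widehat T}|/\alpha < 2 + 8I(h+2)$.

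Substituting this into Lemma~\ref{lemma:direct} yields $C_{\text{direct}}(\T \setminus {\widehat T}) < 4\bigl(2 + 8I(h+2) + \log_2 D\bigr)C^* = \bigl(8 + 32I(h+2)\bigr)C^* + 4C^*\log_2 D$. Finally, using $I \geq 1$ and $h \geq 0$, so that $4I(h+2) \geq 8$, the additive constant absorbs into the main term, $8 + 32I(h+2) \leq 36I(h+2)$, which gives the claimed bound $C_{\text{direct}}(\T \setminus {\widehat T}) \leq 36(h+2)I\,C^* + 4C^*\log_2 D$.

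The main obstacle I anticipate is the combinatorial bookkeeping: correctly identifying exactly which transactions land in $\T \setminus {\widehat T}$, justifying $|U| < 2\alpha$ from the single top-level cluster, and confirming that the number of contributing levels is $h+2$. Once these counts are pinned down, folding the additive constant $8$ into the $36I(h+2)$ term via $I \geq 1$ is routine, so no analytical difficulty remains beyond the accounting.
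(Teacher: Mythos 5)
Your argument is correct and, although the paper omits its own proof of this lemma, it is clearly the intended one: splitting $\T \setminus {\widehat T}$ into the at most $2\alpha$ never-assigned transactions plus the at most $8I\alpha$ transactions per pruned level over $h+2$ levels, and feeding $|\T\setminus{\widehat T}|/\alpha < 2 + 8I(h+2)$ into Lemma~\ref{lemma:direct}, reproduces the stated constants exactly ($4\cdot 8I(h+2)+8 \leq 36I(h+2)$ since $I(h+2)\geq 2$, and the $4C^*\log_2 D$ term comes straight from Lemma~\ref{lemma:direct}). No gaps.
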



\begin{theorem}
\label{thereom:main}
For executing all transactions in $\T$, the total communication cost of Algorithm \ref{alg:offline-single-obj}
is $C = O \left( C^* \cdot \frac {Tour(S_f)}{Tour^*(S_f)} \cdot \log D \right )$.
\end{theorem}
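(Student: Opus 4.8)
The plan is to obtain the total cost by simply summing the two contributions that have already been isolated in the preceding lemmas, and then collapse all the structural constants of the partition hierarchy into $O(1)$ factors. Concretely, the set of transactions $\T$ is partitioned into those with a surviving dedicated super-leader, namely ${\widehat T} = \bigcup_{i \in R} {\widehat T}_i$, and the remainder $\T \setminus {\widehat T}$ which are moved directly to the object's home node $v'$. Since the schedule produced by Algorithm~\ref{alg:offline-single-obj} handles these two groups separately (object tour over $S_f$ for the first group, direct moves for the second), the total communication cost is $C = C({\widehat T}) + C_{\text{direct}}(\T \setminus {\widehat T})$.

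First I would substitute the bounds of Lemma~\ref{lemma:in-R} and Lemma~\ref{lemma:not-R} to get
\[
C \leq 74\,\frac{Tour(S_f)}{Tour^*(S_f)}\,(h+1)\,\zeta I \sigma \rho\, C^* \;+\; 36(h+2) I\, C^* \;+\; 4\, C^* \log_2 D .
\]
Next I would recall that for a graph of constant doubling dimension the parameters $\rho, \sigma, I$ are constant, and $\zeta = 2^{\delta \log(8\sigma)}$ (from Lemma~\ref{lemma:doubling-special}) is likewise constant; moreover $h = \lceil \log_\rho D \rceil$, so $h+1 = O(\log D)$ and $h+2 = O(\log D)$. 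Hence the first term is $O\!\left(\frac{Tour(S_f)}{Tour^*(S_f)}\,\log D\,C^*\right)$, while the two additive terms are each $O(\log D \cdot C^*)$.

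The one genuinely necessary observation — and the only place where any care is needed — is that the actual tour is never cheaper than the optimal one, i.e. $Tour(S_f) / Tour^*(S_f) \geq 1$. This lets me absorb the additive direct-move contributions $36(h+2)I\,C^*$ and $4\,C^*\log_2 D$ into the multiplicative first term, since each of them is $O(\log D \cdot C^*) \leq O\!\left(\frac{Tour(S_f)}{Tour^*(S_f)}\log D \cdot C^*\right)$. Combining the three bounded terms yields $C = O\!\left(C^* \cdot \frac{Tour(S_f)}{Tour^*(S_f)} \cdot \log D\right)$, as claimed. I do not expect a real obstacle here: the heavy lifting was already done in Lemmas~\ref{lemma:tour-bound}--\ref{lemma:not-R}, and this theorem is essentially the assembly step, so my main attention would simply be on verifying that every hierarchy constant is indeed constant under the constant-doubling-dimension assumption and that the ratio $Tour(S_f)/Tour^*(S_f) \geq 1$ dominates the additive slack.
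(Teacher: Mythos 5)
Your proposal is correct and follows essentially the same route as the paper: decompose $C = C({\widehat T}) + C_{\text{direct}}(\T \setminus {\widehat T})$, apply Lemmas~\ref{lemma:in-R} and~\ref{lemma:not-R}, and use $h = O(\log D)$ together with the constancy of $\sigma, I, \rho, \zeta$. Your explicit remark that $Tour(S_f)/Tour^*(S_f) \geq 1$ is needed to absorb the additive terms is left implicit in the paper but is a correct and worthwhile clarification.
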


\begin{proof}

For the cost $C$ of Algorithm \ref{alg:offline-single-obj}
we have,
$C = C({\widehat T}) + C_{\text{direct}}(\T \setminus {\widehat T}) \ .$

\noindent{Thus, from Lemmas \ref{lemma:in-R} and \ref{lemma:not-R}, we get:}
$$C \leq 74 \frac {Tour(S_f)}{Tour^*(S_f)} (h + 1) \zeta I \sigma \rho C^* + 36 (h+2) I C^* + 4 C^* \log_2 D.$$
The result follows since $h = O(\log D)$, and $\sigma, I, \rho, \zeta$ are constants
(see Lemma \ref{lemma:doubling-special} for $\zeta$ being a constant).
\qed
\end{proof}

When using a TSP tour based on the minimum weight spanning tree, we get ${Tour(S_f)} / {Tour^*(S_f)} \leq 2$.
Thus, from Theorem \ref{thereom:main},
we get the following result.
\begin{corollary}
Using a MST based TSP to implement the tour of the object,
the approximation of Algorithm~\ref{alg:offline-single-obj}
is $O(\log D)$.
\qed
\end{corollary}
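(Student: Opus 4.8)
The plan is to combine the classical metric-TSP $2$-approximation via minimum spanning trees with the bound already established in Theorem~\ref{thereom:main}. The only quantity appearing in that bound that is not yet pinned down is the ratio $Tour(S_f)/Tour^*(S_f)$, so the entire task reduces to showing that this ratio is at most $2$ when the object's tour is realized by an MST-based TSP construction. Once that is in hand, the corollary is immediate.

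First I would fix the relevant metric: the shortest-path metric of $G$ restricted to the node set $S_f \cup \{v'\}$, i.e.\ the complete graph $G'$ whose edge weights are the pairwise distances in $G$ (as set up in Section~\ref{section:overlay-tree}). This metric satisfies the triangle inequality, which is precisely what makes the MST-doubling argument valid. It is important here that $v'$, the object's initial home node, belongs to this vertex set, so that both the spanning tree and the optimal tour $Tour^*(S_f)$ are taken over the \emph{same} set of nodes.

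Next I would carry out the two standard inequalities. Let $M$ be a minimum weight spanning tree of $S_f \cup \{v'\}$ in this metric. Deleting any single edge from the optimal tour $Tour^*(S_f)$ produces a spanning tree, so the weight of $M$ is at most $Tour^*(S_f)$. On the other side, doubling every edge of $M$ yields an Eulerian multigraph; an Euler traversal uses each doubled edge once for a total length of twice the weight of $M$, and shortcutting past already-visited vertices (legal by the triangle inequality) does not increase the length, giving a Hamiltonian tour of weight at most twice that of $M$. This is exactly the tour the object follows in Algorithm~\ref{alg:offline-single-obj}, so
\[
Tour(S_f) \le 2\cdot\mathrm{weight}(M) \le 2\cdot Tour^*(S_f),
\]
and hence $Tour(S_f)/Tour^*(S_f) \le 2$.

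Finally I would substitute this into Theorem~\ref{thereom:main}. Since $Tour(S_f)/Tour^*(S_f) \le 2 = O(1)$, the bound $C = O\!\left(C^*\cdot \frac{Tour(S_f)}{Tour^*(S_f)}\cdot \log D\right)$ collapses to $C = O(C^* \log D)$, which is an $O(\log D)$ approximation. I do not anticipate any genuine obstacle; the only point that must be stated with care is the inclusion of $v'$ in the common vertex set, ensuring that the comparison between $\mathrm{weight}(M)$ and $Tour^*(S_f)$ is between two structures on identical nodes. Everything else is the textbook metric-TSP $2$-approximation, so the argument is short and the work lies entirely in invoking Theorem~\ref{thereom:main} correctly.
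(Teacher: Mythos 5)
Your proposal is correct and follows exactly the paper's route: the paper likewise obtains the corollary by asserting $Tour(S_f)/Tour^*(S_f)\le 2$ for the MST-based tour and substituting into Theorem~\ref{thereom:main}, with your version merely spelling out the standard doubling-and-shortcutting argument that the paper leaves implicit. No gaps.
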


Using a universal TSP tour based on $H$,
we get ${Tour(S_f)} / {Tour^*(S_f)} = O(\log n)$.
Thus, from Theorem \ref{thereom:main},
we get the following result.

\begin{corollary}
Using a universal TSP to implement the tour of the object,
the approximation of Algorithm~\ref{alg:offline-single-obj}
is $O(\log n \cdot \log D)$.
\qed
\end{corollary}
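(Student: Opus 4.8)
The plan is to bound the total cost $C$ by decomposing the transaction set $\T$ into two disjoint groups according to how Algorithm~\ref{alg:offline-single-obj} ultimately serves each transaction, to bound the cost contributed by each group separately, and then to add the two bounds. Because the cost of moving objects and transactions along edges is additive over disjoint groups of transactions, we may write $C = C(\widehat{T}) + C_{\text{direct}}(\T \setminus \widehat{T})$, where $\widehat{T} = \bigcup_{i \in R} \widehat{T}_i$ is the set of transactions whose dedicated super-leader survives the pruning step (its level $i$ lies in $R$, i.e.\ satisfies $|\widehat{T}_i| \geq 8I\alpha$), and $\T \setminus \widehat{T}$ is the complementary set of transactions that are relocated directly to the object's home node $v'$.

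First I would handle the group $\widehat{T}$. These transactions are served by the object's TSP tour over the surviving super-leaders $S_f$ together with the movement of each transaction to its dedicated super-leader, and the object-tour contribution is the dominant term. This term is precisely where the approximation factor $Tour(S_f)/Tour^*(S_f)$ of the chosen tour-construction method enters. Invoking Lemma~\ref{lemma:in-R} directly gives $C(\widehat{T}) \leq 74 \frac{Tour(S_f)}{Tour^*(S_f)} (h+1) \zeta I \sigma \rho\, C^*$, where the factor $(h+1)$ accumulates from summing the per-level contributions over the $O(\log D)$ levels of the hierarchy $H$.

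Next I would bound the cost of the second group $\T \setminus \widehat{T}$, which consists exactly of the transactions that never acquired a surviving super-leader, either because their cluster never reached the super-leader threshold or because their level failed the $8I\alpha$ pruning test. Since all of them are sent directly to $v'$, their cost is the direct-move cost, which Lemma~\ref{lemma:not-R} bounds by $36(h+2)I C^* + 4 C^* \log_2 D$; the pruning threshold was chosen precisely so that on these sparse levels moving the transactions to the object is cheaper than the reverse, keeping this term within a logarithmic factor of $C^*$.

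Finally, summing the two bounds yields
\[
C \leq 74 \frac{Tour(S_f)}{Tour^*(S_f)} (h+1) \zeta I \sigma \rho\, C^* + 36(h+2) I C^* + 4 C^* \log_2 D,
\]
and the claim follows by substituting $h = O(\log D)$ and noting that $\sigma, I, \rho$, and $\zeta$ are constants for graphs of constant doubling dimension. The genuine obstacle is not this final combination, which is routine, but the supporting Lemma~\ref{lemma:in-R}: relating the actual object-tour cost to $C^*$ requires the chain of geometric arguments (Lemmas~\ref{lemma:subpath}, \ref{lemma:doubling-special}, \ref{lemma:L-bound}, and the tour bound of Lemma~\ref{lemma:tour-bound}) showing that the surviving super-leaders lie in a bounded-radius neighborhood of the optimal object path and can thus be covered by few $\rho^i$-balls, so that $\alpha \cdot Tour^*(S_f)$ stays within a constant-times-$(h+1)$ factor of $C^*$.
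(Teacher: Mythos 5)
Your decomposition $C = C(\widehat{T}) + C_{\text{direct}}(\T\setminus\widehat{T})$ and the invocation of Lemmas~\ref{lemma:in-R} and~\ref{lemma:not-R} is exactly how the paper proves Theorem~\ref{thereom:main}, so your route coincides with the paper's up to that point. But what your final display establishes is precisely the theorem, namely $C = O\bigl(C^*\cdot \frac{Tour(S_f)}{Tour^*(S_f)}\cdot \log D\bigr)$, not the corollary. The entire content of the corollary beyond the theorem is the claim that when the object's tour is the universal TSP tour induced by the hierarchy $H$, the stretch satisfies $Tour(S_f)/Tour^*(S_f) = O(\log n)$. Nowhere in your argument does a $\log n$ factor appear or get justified: you note that the ratio ``is where the approximation factor of the chosen tour-construction method enters,'' but you never state or use the universal-TSP guarantee (recalled in Section~\ref{section:overlay-tree}, following Jia et al.) that the order induced by $H$ on any subset $S'\subseteq V$ is a $\kappa$-approximation of the optimal tour on $S'$ with $\kappa = O(\log n)$. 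Without that step the final bound still contains an unbounded ratio, and the claimed $O(\log n\cdot\log D)$ approximation does not follow. The fix is a single sentence, but it is the only sentence that is specific to this corollary rather than to Theorem~\ref{thereom:main}.
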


\subsection{Multiple Objects}
\label{sec:multiple-object}
We consider a set of shared objects $\cO =\{o_1, o_2, \ldots\}$, all initially positioned at some node $v'$ of $G$. Each transaction $T_i \in \T$ accesses at most $k$ objects from $\cO$. The set of objects accessed by transaction $T_i$ is denoted as $objs(T_i ) \subseteq \cO$. We assume that nodes have global knowledge of transactions and provide a scheduling algorithm {\sc GlobalAware\_MultipleObjs} for executing the transactions in $\T$. The pseudocode is provided in Algorithm~\ref{alg:offline-multiple-obj-alg}. 

In Algorithm~\ref{alg:offline-multiple-obj-alg}, the objective is to ensure synchronized access to the objects at minimal cost for executing the transactions.
We accomplish this by first using Algorithm \ref{alg:offline-single-obj} to calculate the dedicated super-leaders for each object individually.
We then combine some of the dedicated super-leaders of the objects in a set of nodes $S_f$.
Finally, we use a single TSP tour over $S_f \cup \{ v'\}$, where the objects move synchronously from node to node along the tour, executing the corresponding transactions at each visited node.
Note that an object visits only the nodes in the TSP tour where it is needed for a transaction, skipping all others.

The set of dedicated super-leaders $S_f$ is determined as follows.
Each transaction $T_i\in \T$ accesses up to $k$ objects, potentially with different dedicated super-leaders for each such object in $objs(T_i)$.
For each $T_i$, we select the nearest super-leader among those assigned to its objects, and designate it as the $T_i$'s dedicated super-leader $s$. 
Transaction $T_i$ moves to $s$, and $s$ is added to $S_f$.
During the TSP traversal of $S_f$, all objects in $objs(T_i)$ will be brought to the leader node $s$, and $T_i$ is executed.
If $T_i$ has no dedicated super-leader (i.e., none of its objects do), then it moves to $v'$ and is executed when $v'$ is visited in the tour.



\begin{algorithm}[!t]
\scriptsize
\caption{\sc GlobalAware\_MultipleObjs}
\label{alg:offline-multiple-obj-alg}
\SetKwInOut{Input}{Input}\SetKwInOut{Output}{Output}
\Input{Graph $G=(V,E, w)$ with $(r,\sigma,I)$-partition $H$, and a set of transactions $\T$}
\BlankLine
$v'\leftarrow$ initial home node for all objects in $\cO$; 
$S_f \gets \emptyset$\;
\BlankLine
\tcp{Find super-leaders w.r.t. each object}
\SetKwBlock{DoInRound}{\normalfont {{\bf for each} {\em object $o_j \in \cO$} {\bf do}}}{}
\DoInRound{
Use Algorithm~\ref{alg:offline-single-obj} to find the dedicated super-leaders for object $o_j$\;
}

\BlankLine
\tcp{Find dedicated super-leader for $T_i$}
\SetKwBlock{DoInRound}{\normalfont {{\bf for each} {\em transaction $T_i \in \T$} {\bf do}}}{}
\DoInRound{
    \If{$T_i$ has at least one dedicated super-leader assigned above for any of the objects in $objs(T_i)$}
    {
    Find the closest super-leader $s$ in $H$ among all the dedicated super-leaders assigned to $T_i$ and mark it as the dedicated super-leader for $T_i$\;
    Move $T_i$ to $s$ and
    add $s$ to $S_f$\; 
    }
    {\bf else}
    move $T_i$ to the initial home node of objects, $v'$\;  
}

\BlankLine
\tcp{Build and traverse TSP tour}


Calculate TSP tour for 
$S_f\cup \{v'\}$\;
Each object $o_j \in \cO$ follows the order of the TSP tour for visiting the nodes where $o_j$ is required to execute respective transactions\;

\end{algorithm}

\vspace{1mm}
\paragraph{\bf Analysis.}
We continue with the analysis of Algorithm~\ref{alg:offline-multiple-obj-alg}.
Consider an object $o_j \in \cO$. 
If a super-leader $s'$ of $o_j$, as returned by Algorithm~\ref{alg:offline-single-obj}, is not used in the set $S_f$ calculated by Algorithm~\ref{alg:offline-multiple-obj-alg},
then another super-leader is being used in $S_f$ which is closer to the transactions that were going to move to $s'$.
Thus, since $\alpha > \beta$, the cost of such a change of super-leader does not incur an additional cost with respect to object $o_j$ (where cost of $o_j$ is as calculated in Theorem \ref{thereom:main}).

Now consider the case where object $o_j$ is accessed by a transaction $T_i$, originally located at some node $u$, and $T_i$ 
does not have a dedicated super-leader for $o_j$ under Algorithm~\ref{alg:offline-single-obj},
but does have a dedicated super-leader $s$ under Algorithm~\ref{alg:offline-multiple-obj-alg}.
If the distance from $u$ to $s$ is no greater than the distance from $u$ to $v'$,
then the cost of accessing $o_j$ remains unaffected (as calculated in Theorem \ref{thereom:main}).

If the distance from $u$ to $s$ is more than the distance from $u$ to $v'$,
then there is additional cost due to the transaction $T_i$ 
traveling a longer distance in Algorithm~\ref{alg:offline-multiple-obj-alg}.
However, since $T_i$ has $s$ as a dedicated super-leader, there must exist some 
object $o_z \in \cO$ which is accessed by $T_i$ for which $s$ is a dedicated super-leader for $T_i$.
The additional cost with respect to object $o_j$ is no more than the cost $C'$ of moving $T_i$ to $s$,
as counted in the analysis of object $o_z$ (in Theorem \ref{thereom:main}).
Since $T_i$ can access at most $k-1$ other objects similar to $o_j$, the additional cost for moving these objects to $s$ is at most $(k-1)C'$.

Hence, using $S_f$ increases the total cost by at most a factor of $k$ compared to the sum of the costs of individual objects. As a result, the approximation of Theorem~\ref{thereom:main}
scales by a factor of $k$, giving the following theorem and corollaries:



\begin{theorem}
\label{thereom:main2}
For executing all transactions in $\T$, 
the total communication cost of Algorithm~\ref{alg:offline-multiple-obj-alg}
is $C = O \left( k \cdot C^* \cdot \frac {Tour(S_f)}{Tour^*(S_f)} \cdot \log D \right )$.
\qed
\end{theorem}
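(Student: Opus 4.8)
The plan is to reduce the multi-object cost to the single-object guarantee of Theorem~\ref{thereom:main} by a per-object charging argument that loses only a factor of $k$. First I would fix an object $o_j \in \cO$ and apply Theorem~\ref{thereom:main} to the subinstance consisting of $o_j$ together with exactly the transactions that access $o_j$; write $C_j$ for the cost that Algorithm~\ref{alg:offline-single-obj} would incur on this subinstance, $C^*_j$ for its optimum, and $S_f^{j}$ for the super-leader set it produces. Two reductions make these quantities comparable to the global ones appearing in the statement. Projecting any optimal multi-object schedule onto the movements of $o_j$ and the transactions touching $o_j$ yields a feasible single-object schedule, so $C^*_j \le C^*$; and the tour ratio for $o_j$ is controlled by the combined ratio $Tour(S_f)/Tour^*(S_f)$. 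Hence Theorem~\ref{thereom:main} gives $C_j = O\!\big(C^* \cdot \frac{Tour(S_f)}{Tour^*(S_f)} \cdot \log D\big)$ for each $j$.

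Next I would compare the single run of Algorithm~\ref{alg:offline-multiple-obj-alg} against this collection of single-object runs, showing the former is at most $k$ times the cost accounted by the latter. The comparison splits into the two cases isolated in the preceding analysis. When a per-object super-leader $s'$ of $o_j$ is dropped in favour of some $s \in S_f$ that is strictly closer to the transactions that would have used $s'$, no extra cost is charged to $o_j$: a closer meeting node can only shorten both the object and transaction legs, and $\alpha > \beta$ ensures the object side never erases the saving. The remaining case is a transaction $T_i$ at node $u$ with no dedicated super-leader for $o_j$ under Algorithm~\ref{alg:offline-single-obj} but assigned $s$ under Algorithm~\ref{alg:offline-multiple-obj-alg}. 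If $\dist(u,s) \le \dist(u,v')$ the cost is unchanged; otherwise the extra cost attributable to $o_j$ is at most the cost $C'$ of routing $T_i$ to $s$, which is already paid for in the analysis of the object $o_z \in objs(T_i)$ that owns $s$. Since $T_i$ accesses at most $k$ objects, this single already-counted quantity $C'$ is reused by at most $k$ objects, so summing over all objects inflates the accounted cost by at most a factor of $k$.

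Assembling the pieces yields $C \le k \cdot \big(\text{cost accounted by the single-object analyses}\big) = O\!\big(k \cdot C^* \cdot \frac{Tour(S_f)}{Tour^*(S_f)} \cdot \log D\big)$, which is the claimed bound.

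I expect the main obstacle to be the object-movement bookkeeping in the second case. The subtle point is that merging super-leaders can force an object $o_j$ to visit a node $s$ lying outside its own set $S_f^{j}$, so one must argue that the incremental length added to $o_j$'s traversal of the combined tour is genuinely dominated by the cost $C'$ that the owner object $o_z$ already pays, and that each such node is charged to at most $k$ objects rather than to the potentially much larger total number $|\cO|$. Making this charging precise, so that the blow-up is exactly $k$ and not $|\cO|$, is where the real work lies; by contrast, the two reductions establishing $C^*_j \le C^*$ and bounding the per-object tour ratio are comparatively routine.
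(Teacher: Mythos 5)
Your proposal follows essentially the same route as the paper's own analysis: a per-object application of Theorem~\ref{thereom:main} (with the implicit reduction $C^*_j \le C^*$), followed by the identical two-case charging argument --- no extra cost when a super-leader is merely swapped for a closer one (using $\alpha > \beta$), and a charge of the rerouting cost $C'$ to the owning object $o_z$, multiplied by at most $k$ since each transaction touches at most $k$ objects. The tour-bookkeeping subtlety you flag at the end is genuine, but it is precisely the point the paper also leaves informal, so your argument matches theirs in both structure and level of rigor.
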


\begin{corollary}
Using a MST based TSP to implement the tour of the object,
the approximation of Algorithm~\ref{alg:offline-multiple-obj-alg}
is $O(k \cdot \log D)$.
\qed
\end{corollary}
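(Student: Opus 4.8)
The plan is to invoke Theorem~\ref{thereom:main2} and then bound the tour ratio $Tour(S_f)/Tour^*(S_f)$ using the standard guarantee of an MST-based TSP tour. First, I would recall that for any finite set of nodes in a metric space---here the node set $S_f \cup \{v'\}$ under the shortest-path metric of $G$, which satisfies the triangle inequality---the classical doubling-the-MST argument produces a tour of length at most twice that of the optimal tour. Concretely, doubling every edge of the minimum weight spanning tree of $S_f \cup \{v'\}$ yields an Eulerian multigraph whose Euler traversal has length exactly $2 \cdot MST(S_f \cup \{v'\})$; shortcutting repeated vertices (valid by the triangle inequality) produces a Hamiltonian tour of no greater length, and since $MST(S_f \cup \{v'\}) \leq Tour^*(S_f)$, we conclude $Tour(S_f) \leq 2\, Tour^*(S_f)$, i.e.\ $Tour(S_f)/Tour^*(S_f) \leq 2$.

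Next, I would substitute this bound directly into the expression of Theorem~\ref{thereom:main2}. Since the cost of Algorithm~\ref{alg:offline-multiple-obj-alg} satisfies $C = O\!\left(k \cdot C^* \cdot \frac{Tour(S_f)}{Tour^*(S_f)} \cdot \log D\right)$ and the ratio is at most the constant $2$, this factor is absorbed into the big-$O$ constant, giving $C = O(k \cdot C^* \cdot \log D)$. Comparing against the optimal cost $C^*$ then yields the approximation ratio $C/C^* = O(k \cdot \log D)$, which is exactly the claim.

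There is essentially no hard obstacle here; the result is an immediate specialization of Theorem~\ref{thereom:main2}, and it parallels the single-object corollary following Theorem~\ref{thereom:main}, which asserts the same $\leq 2$ bound for the MST-based tour. The only point requiring a moment's care is confirming that the doubling-the-MST guarantee applies in this context: this rests on the triangle inequality for shortest-path distances in $G$, which holds since the distances are induced by a weighted connected graph, so the shortcutting step in the Euler-tour argument is legitimate.
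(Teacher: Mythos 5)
Your proposal is correct and matches the paper's argument: the corollary is an immediate specialization of Theorem~\ref{thereom:main2} obtained by bounding $Tour(S_f)/Tour^*(S_f) \leq 2$ via the standard doubling-the-MST construction, exactly as the paper does for the single-object corollary. The extra detail you supply on the Euler-tour/shortcutting justification is a sound elaboration of the same route, not a different one.
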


\begin{corollary}
Using a universal TSP to implement the tour of the object,
the approximation of Algorithm~\ref{alg:offline-multiple-obj-alg}
is $O(k \cdot\log n \cdot \log D)$.
\qed
\end{corollary}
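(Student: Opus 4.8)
The plan is to treat this corollary as a direct instantiation of Theorem~\ref{thereom:main2}, since the only quantity appearing in that bound that depends on the choice of tour is the ratio $Tour(S_f)/Tour^*(S_f)$. Theorem~\ref{thereom:main2} already establishes that the total communication cost of Algorithm~\ref{alg:offline-multiple-obj-alg} is $C = O(k \cdot C^* \cdot \frac{Tour(S_f)}{Tour^*(S_f)} \cdot \log D)$ for \emph{any} implementation of the object tour over $S_f \cup \{v'\}$. Hence it suffices to bound this single ratio for the specific case where the tour is obtained from the universal TSP tour derived from the hierarchy $H$.

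First I would invoke the universal-TSP guarantee recalled in Section~\ref{section:overlay-tree}: the hierarchy $H$ yields a single global ordering of all nodes of $G$ such that, for any subset $S'$, the induced order realizes a $\kappa$-factor approximation of the optimal TSP tour on $S'$ in the complete distance graph, with $\kappa = O(\log n)$. Because this ordering is fixed once $H$ is constructed and its approximation guarantee holds uniformly across all subsets, I can apply it verbatim to the particular set $S_f \cup \{v'\}$ produced by Algorithm~\ref{alg:offline-multiple-obj-alg}, irrespective of how $S_f$ was assembled from the per-object dedicated super-leaders. Applying this guarantee gives $Tour(S_f)/Tour^*(S_f) \leq \kappa = O(\log n)$.

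Substituting $Tour(S_f)/Tour^*(S_f) = O(\log n)$ into the bound of Theorem~\ref{thereom:main2} then yields $C = O(k \cdot C^* \cdot \log n \cdot \log D)$, so the schedule computed by Algorithm~\ref{alg:offline-multiple-obj-alg} is an $O(k \cdot \log n \cdot \log D)$ approximation of the optimal cost $C^*$, as claimed. There is essentially no genuine obstacle in this step; the substitution is routine. The only point requiring a line of care is to confirm that the universal ordering applies to $S_f \cup \{v'\}$ exactly as it did in the single-object corollary for Algorithm~\ref{alg:offline-single-obj} — namely, that restricting the fixed global order to this particular subset still attains the $O(\log n)$ bound. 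This is immediate from the universality of the tour, which makes its guarantee independent of the subset in question, so no further structural analysis of $S_f$ is needed.
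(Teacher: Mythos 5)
Your proposal is correct and matches the paper's own route exactly: the paper obtains this corollary by plugging the universal-TSP ratio $Tour(S_f)/Tour^*(S_f) = O(\log n)$ (guaranteed uniformly for any subset by the hierarchy $H$, as recalled in Section~\ref{section:overlay-tree}) into the bound of Theorem~\ref{thereom:main2}. Your added remark that the universality makes the guarantee independent of how $S_f$ was assembled is the same (implicit) observation the paper relies on, so there is nothing missing.
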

\section{Fully Distributed Scheduling}
\label{sec:distributed-scheduler}
In this Section, we provide fully distributed transaction scheduling algorithms assuming that the nodes in $G$ do not have the global knowledge of transactions. First, we discuss the algorithm in detail for the case where transactions access a single shared object. Later, we provide a high-level idea for extending the algorithm for multiple shared objects.

\subsection{Single Object}
\label{sec:fully-dis-single-object}
As in Section~\ref{sec:single-object}, we consider a single shared object $o$ of size $\alpha > 1$ and a set of transactions $\T=\{T_1,T_2,\dots\}$
initially positioned at arbitrary nodes of $G$. Each transaction $T_i\in\T$ requires access to object $o$ for the execution. We provide a fully distributed scheduling algorithm {\sc FullyDistributed\_SingleObj} for the execution of transactions in $\T$ accessing the object $o$. 
The algorithm operates in three phases:
(1) Cluster leaders share transaction information with their parent leaders and identify super-leaders;
(2) Each transaction selects its dedicated super-leader and moves to that node for execution;
(3) A TSP tour is computed to move the object to each dedicated super-leader for transaction execution.
We now describe each phase in detail.

\vspace{1mm}
{\bf Phase 1:} 
In this phase, transaction information is propagated through the levels of the hierarchy $H$ to identify the super-leaders. 
The process begins with each node simultaneously sharing its transaction data with the leader of the corresponding cluster at level $l=0$ in $H$. 
At each level $l\geq 0$, the leader of each cluster (set) $X$ ($leader(X)$) in the partition $\mathcal{P}_l$ counts the transactions within its cluster. If the count reaches at least $2\gamma$, the cluster leader becomes a super-leader and notifies all child leaders recursively down to the level $0$. It also forwards the transaction count to the parent cluster at level $l+1$, continuing this process up to the root of the hierarchy $H$. At the end of Phase 1, a set $S$ of all possible super-leaders is computed, and each cluster leader has the knowledge of $S$. 

\vspace{1mm}
{\bf Phase 2:} In this phase, each node independently determines the dedicated super-leader for its associated transaction $T_i\in\T$. Let $X$ be the lowest-level cluster in $H$ that contains $T_i$, and
$S$ be the set of super-leaders whose information was received by $leader(X)$ from its parent clusters. Then, $T_i$ selects a dedicated super-leader $s\in S$ such that $s$ is the closest super-leader to $T_i$ among all in $S$.
If $S$ is empty (i.e., $leader(X)$ didn't receive any super-leader information from its parents), then $T_i$ is moved to the node where the object $o$ is initially located. 

The set of dedicated super-leaders is further pruned as follows. At each level $l\geq 0$ in $H$, the algorithm checks the total number of transactions across all dedicated super-leaders. If this total is less than $8I\alpha$, then the associated transactions are moved to the initial home node of object $o$ and the corresponding dedicated super-leaders are unmarked. 
To compute the total at level $l$, a reference leader node $s_z$ is chosen at the level $l$, and all dedicated super-leaders at that level send their transaction counts to $s_z$. Consequently, $s_z$ sums all the counts of the number of transactions it received and sends the calculated sum to all the dedicated super-leaders of that level. 

\vspace{1mm}
{\bf Phase 3:}
In this phase, the algorithm determines the schedule to move the object $o$ to each dedicated super-leader and executes the transactions. This can be done in two ways: by using a universal TSP tour or an MST-based tour. 
Pseudocode and descriptions of the approaches are omitted due to space constraints.

\paragraph{\bf Analysis.}
In the fully distributed algorithm {\sc FullyDistributed\_SingleObj},
each node initially knows only its own transactions. The key challenge is to compute the set of super-leaders, which is handled in Phase 1 of the algorithm. After that, the algorithm proceeds similarly to Algorithm~\ref{alg:offline-single-obj}. Thus, any additional communication cost comes only from Phase 1 of the distributed algorithm.
Nevertheless, we show that {\sc FullyDistributed\_SingleObj} also achieves the same approximation as of {\sc GlobalAware\_SingleObj} (i.e., Algorithm~\ref{alg:offline-single-obj}).

Let $C$ be the total communication cost of Algorithm~\ref{alg:offline-single-obj} that involves the movement of transactions from current node to the respective dedicated super-leaders (say $C_1$) and the movement of object between the dedicated super-leaders in $H$ (say $C_2$). That means, $C=C_1+C_2$.

Let $C'$ be the total communication cost of {\sc FullyDistributed\_SingleObj},
which involves three phases. Let $C'_{p_1}, C'_{p_2},$ and $C'_{p_3}$ be the communication cost of Phase 1, Phase 2, and Phase 3, respectively. Then, $C' = C'_{p_1}+ C'_{p_2}+C'_{p_3}$. Since the execution of Phase 2 and Phase 3 of the algorithm is
equivalent to Algorithm~\ref{alg:offline-single-obj}, we have, $C'_{p_2}+C'_{p_3} = C$. That means, $C' = C'_{p_1}+C$.

Phase 1 
runs only once throughout the execution. Three types of messages are exchanged between different nodes of $H$ during the execution of Phase 1. 
\begin{itemize}
    \item [i.] Each node sends its transaction information to its cluster leader at level $l=0$ in $H$. This cost is at most $C_1$.
    \item [ii.] Each cluster leader at level $l\geq 0$ sends the count of total transactions in its cluster to the leader of parent cluster at level $l+1$. This cost is at most $C_2$.
    \item [iii.] When a cluster leader becomes a super-leader, it recursively notifies all child leaders down to the level $0$. This cost is also at most $C_2$. 
\end{itemize}

Hence, $C'_{p_1}\leq C_1 + 2C_2 < 2C$, which implies, $C'<2C+C < 3C$. Therefore, the following theorem is immediate:

\begin{theorem}
\label{theorem:fully-distributed-single-obj}
Algorithm~\ref{alg:offline-multiple-obj-alg} achieves $O(\log n \cdot \log D)$--approximation in communication cost when using universal TSP tour, and $O(\log D)$--approximation when using universal TSP tour of the object.
\end{theorem}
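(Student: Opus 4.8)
The plan is to show that {\sc FullyDistributed\_SingleObj} incurs only a constant-factor overhead relative to the centralized Algorithm~\ref{alg:offline-single-obj}, so that the approximation guarantees already proved for the centralized version transfer directly. Concretely, writing $C'$ for the cost of the distributed algorithm and $C$ for the cost of {\sc GlobalAware\_SingleObj} on the same instance, it suffices to establish $C' = O(C)$. Since $C$ yields an $O(\log n \cdot \log D)$ approximation under a universal TSP tour and an $O(\log D)$ approximation under an MST-based tour (the corollaries to Theorem~\ref{thereom:main}), an $O(1)$ blow-up preserves both bounds, which is exactly the statement of the theorem.

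The first step is to decompose $C'$ by phase, $C' = C'_{p_1} + C'_{p_2} + C'_{p_3}$, and to observe that Phases~2 and~3 are operationally identical to the centralized algorithm: both relocate each transaction to its assigned dedicated super-leader (or to $v'$) and then route the object along the same TSP tour over $S_f \cup \{v'\}$. This gives $C'_{p_2} + C'_{p_3} = C$ and reduces the problem to bounding the extra Phase-1 cost $C'_{p_1}$. I would then split the centralized cost as $C = C_1 + C_2$, where $C_1$ is the aggregate transaction-relocation cost and $C_2$ is the object-movement cost along the hierarchy $H$.

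The core of the argument is to charge each of the three Phase-1 message types against $C_1$ or $C_2$. Type (i), in which every node forwards its transaction information to its level-$0$ cluster leader, is charged against $C_1$, since reaching the lowest-level leader is a prefix of (hence no costlier than) the full relocation of that transaction to its dedicated super-leader. Type (ii), the upward propagation of transaction counts from each cluster leader to its parent, and type (iii), the downward recursive super-leader notifications to child leaders, each traverse the hierarchy along the same inter-leader links that the object follows in the tour, so each is charged against $C_2$. Combining yields $C'_{p_1} \leq C_1 + 2C_2 \leq 2C$, and therefore $C' < 3C = O(C)$, from which the theorem follows.

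The main obstacle is making the charging argument for the three message types rigorous. In particular, I must justify that the aggregate cost of routing transaction information up to the level-$0$ leaders is dominated by $C_1$, and that the count-propagation and notification traffic along parent--child leader links is dominated by the object's tour cost $C_2$. This requires a careful appeal to the diameter bounds of the $(r_l,\sigma,I)$-partitions, so that the per-level movement within and between clusters is accounted for consistently with the transaction and object movements already counted in $C_1$ and $C_2$; the subtlety lies in handling transactions whose dedicated super-leader lies at a higher level or in a different direction than their level-$0$ leader, and ensuring these are still absorbed into $C_1$ in aggregate.
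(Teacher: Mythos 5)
Your proposal follows essentially the same route as the paper's own argument: the same decomposition $C' = C'_{p_1}+C'_{p_2}+C'_{p_3}$ with $C'_{p_2}+C'_{p_3}=C$, the same split $C = C_1 + C_2$, and the same charging of the three Phase-1 message types (type (i) against $C_1$, types (ii) and (iii) against $C_2$) to get $C'_{p_1}\leq C_1+2C_2$ and hence $C' < 3C$. The rigor concerns you flag at the end are real but are left at the same informal level in the paper itself, so your attempt matches its proof.
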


\subsection{Multiple Objects}
\label{sec:multi-object-fully-distributed}
In this section, we briefly describe how the fully distributed algorithm for single object can be extended to handle the transactions accessing multiple shared objects. 
Consider a set of shared objects $\cO =\{o_1, o_2, \ldots\}$, and 
each transaction $T_i \in \T$ accesses at most $k$ objects from $\cO$. 
As in {\sc FullyDistributed\_SingleObj}, transaction information is first shared through the levels of the hierarchy $H$.
Then, we calculate the dedicated super-leaders with respect to each object by running {\sc FullyDistributed\_SingleObj}. After that, the set of dedicated super-leaders for each transaction $T_i \in \T$ are determined by combining the object-specific super-leaders as in Algorithm~\ref{alg:offline-multiple-obj-alg}. 
Finally, we calculate the TSP tour on the dedicated super-leaders. Each object $o_j\in \cO$ visits the respective dedicated super-leader following the TSP tour and transactions are executed at each dedicated super-leader when the required objects gather.
 
\section{Conclusions}
\label{sec:conclusion}
In this paper, we studied transaction scheduling problem in distributed systems modeled as constant doubling dimension graphs. We used a dual-flow transaction execution model in which both objects and transactions can move within the network to minimize communication cost. 
We proposed centralized and fully distributed versions of the algorithm for transaction scheduling that achieve polylogarithmic approximations in communication cost.
In the future, it would be interesting to extend the algorithms and analysis for general and arbitrary graphs. Experimental evaluation of the designed algorithms against different application benchmarks in a practical setting would also be interesting.

\begin{acks}
This paper is supported by NSF grant CNS-2131538.
\end{acks}

\bibliographystyle{ACM-Reference-Format}

\bibliography{references}

\end{document}